\newtheorem{lemma}{Lemma}[section]
\newtheorem{proposition}[lemma]{Proposition}
\newtheorem{theorem}[lemma]{Theorem}
\theoremstyle{definition}
\newtheorem*{question*}{Question}
\newtheorem{remark}[lemma]{Remark}
\newtheorem{example}[lemma]{Example}
\DeclareMathOperator{\Var}{Var}
\title{Multi-target detection with  rotations}
\author[T. Bendory]{Tamir Bendory}
\address{School of Electrical Engineering, Tel Aviv University, Tel Aviv,
Israel}
\author[T.Y. Lan]{Ti-Yen Lan}
\address{Program in Applied and Computational Mathematics, Princeton University,
Princeton, NJ, USA}
\author[N.F. Marshall]{Nicholas F. Marshall}
\address{Department of Mathematics, Oregon State University, Corvallis, OR, USA}
\author[I. Rukshin]{Iris Rukshin}
\address{Program in Applied and Computational Mathematics, Princeton
University, Princeton, NJ, USA}
\author[A. Singer]{Amit Singer}
\address{Program in Applied and Computational Mathematics and the Department of
Mathematics, Princeton University, Princeton, NJ, USA}
\keywords{Multi-target detection,  autocorrelation analysis, bispectrum,
single-particle reconstruction, cryo-EM.}
\begin{document}

\begin{abstract}
We consider the multi-target detection problem of 
estimating a two-dimensional target image from a
large noisy measurement image that contains many randomly rotated and translated
copies of the target image. Motivated by single-particle cryo-electron
microscopy, we focus on the low signal-to-noise regime, where
it is difficult to estimate the locations and orientations of the target images
in the measurement. Our approach uses autocorrelation analysis to estimate 
rotationally and translationally invariant features of the target image. We
demonstrate that, regardless of the level of noise, our technique can be used to
recover the target image when the measurement is sufficiently large. 
\end{abstract}

\maketitle

\section{Introduction} \label{intro}
Let $M$ be a noisy measurement image that contains $p$ randomly rotated and
translated copies of a target image $f$. More precisely, suppose that $f :
\mathbb{R}^2 \rightarrow \mathbb{R}$ is supported on the unit disc, and $f_\phi$
is the rotation of $f$ by angle $\phi$ about the origin.  Further, let $F_\phi :
\mathbb{Z}^2 \rightarrow \mathbb{R}$ be the discretization of $f_\phi$ defined
by $F_\phi({x}) = f_\phi({x}/n)$ for a fixed integer $n$. We assume that
the measurement $M : \{1,\ldots,m\}^2 \rightarrow \mathbb{R}$ has the form
\begin{equation} \label{Meq}
M({x}) = \sum_{j=1}^p F_{\phi_j}(x-x_j) +
\varepsilon({x}),
\end{equation}'.
where $\phi_1,\ldots,\phi_p \in [0,2\pi)$ are uniformly random rotations;
${x}_1,\ldots,{x}_p \in \{n+1,\ldots,m-n\}^2$ are arbitrary
translations; and $\varepsilon({x})$ is i.i.d.\ Gaussian noise on $\{1,\ldots,m\}^2$
with mean zero and variance $\sigma^2$, see the example in Figure \ref{fig:M}.

We further impose a separation condition
$|{x}_{j_1} - {x}_{j_2}| \ge 4n$ for $j_1 \not = j_2$, which ensures that
the targets in the measurement are separated by at least the diameter of their
support. We also assume a density condition $pn^2/m^2 := \gamma > 0$ so that the
targets appear in the measurement at some minimal density.
{ Moreover, it is necessary to assume that $f$ has some regularity;
we assume $f$ is bandlimited (in the harmonics on the disc); see
\ref{bandlimit2d}.}
\begin{figure}[ht!]
\centering
\includegraphics[width=.9\textwidth]{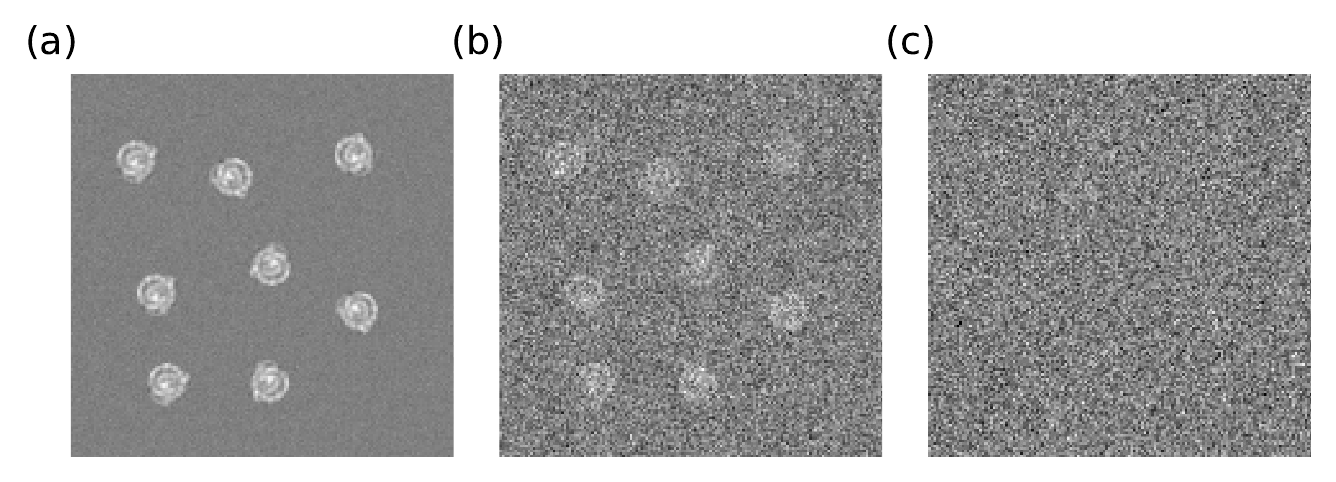} 
\caption{An example of the measurement $M$ defined in \eqref{Meq} with (a)
SNR = $10^2$, (b) SNR = 1 and (c) SNR := $10^{-2}$, where $\text{SNR}=(\pi
n^2\sigma^{2})^{-1} \sum_x F(x)^2$ .} \label{fig:M}
\end{figure}

Given the measurement $M$, the objective is to recover the function $f$. This
problem is called multi-target detection (MTD) with
rotations~\cite{bendory2019multi,lan2019multi}.  Motivated by
single-particle cryo-electron microscopy (cryo-EM), we focus on the low SNR
regime, see Figure~\ref{fig:M}(c), where estimating the unknown translations
and rotations is challenging~\cite{aguerrebere2016fundamental,
bendory2018toward, henderson1995potential}.
We pose the following question.

\begin{question*}
Suppose that $M : \{1,\ldots,m\}^2 \rightarrow \mathbb{R}$ is a measurement of
the form described in~\eqref{Meq} for fixed signal radius $n$ and
density $\gamma$.  If the variance of the noise~$\sigma^2$ is fixed (but might
be arbitrarily high), can the function $f$ be estimated from~$M$ to any fixed
level of accuracy when $m$ is sufficiently large?
\end{question*}

In this paper, we develop a mathematical and computational  framework for
MTD with rotations, and show empirically that the answer to the above
question is affirmative. In particular, we describe an { autocorrelation analysis} algorithm for
recovering the function~$f$ from the measurement $M$ and demonstrate its
effectiveness and numerical stability. Additionally, in Section~\ref{sec:1d}, we consider a simplified
version of this statistical estimation problem in one dimension, where we are
able to establish a theoretical foundation for this algorithm.

%

\section{Motivation and related work} \label{sec:motivation}
\subsection{Motivation}
Our interest in the MTD model arises from the structure determination 
problem for biological molecules. In the past decade, cryo-EM has emerged as a
potent alternative to X-ray crystallography and nuclear magnetic resonance (NMR) spectroscopy 
to resolve the structures of proteins that either cannot be crystallized or are too complex for NMR. 
In cryo-EM, a solution that contains many copies of the target particle is rapidly cooled 
to form thin vitreous ice sheets whose thickness is comparable to the single molecule size.
These sheets are then imaged with an electron microscope.
The measurements in cryo-EM can be modeled as two-dimensional
tomographic projections of identical biomolecules at unknown locations and
orientations followed by some image distortion due to the imaging system.
The projection images are embedded in a large, noisy image, called a micrograph.
The crux of single-particle cryo-EM reconstruction is that, with sufficiently
many micrographs, projection images of similar molecule orientations 
can be combined to improve the SNR and, in turn, reconstruct the 
high-resolution three-dimensional structure of the molecule.

The current computational pipeline for cryo-EM requires particle picking, the
extraction of the biomolecule projection images from the micrographs~\cite{chen2007signature,
eldar2020klt, heimowitz2018apple, scheres2015semi,
wang2016deeppicker,wagner2019sphire}. Then, the three-dimensional structure is
built from the extracted images using a variety of
algorithms~\cite{bendory2019single, frank2006three, grant2018cistem,
punjani2017cryosparc, scheres2012relion, tang2007eman2}. This approach is
problematic for small particles where the SNR of micrographs is low, and
detection becomes impossible~\cite{bendory2018toward, henderson1995potential,aguerrebere2016fundamental}.
As such, the difficulty of detection sets a lower bound on the usable molecule
size in the current analysis workflow of cryo-EM data.

Interest in signal recovery beyond the detection limit has prompted the
realization that the locations of the signal in the measurement are nuisance
parameters; the emerging claim is that signal recovery can be achieved directly
from the measurement~\cite{bendory2018toward}. Methodologies for direct image
estimation have been inspired by Zvi Kam's introduction of autocorrelation
analysis to the structure reconstruction problem dealing with randomly 
oriented biomolecule projections~\cite{kam1980reconstruction}. 
The process involves accumulating the ``spatial correlations", or autocorrelations, of signal density in the
measurements in order to average out the noise without estimating the rotations.
These averages are then used for the reconstruction of the target image. Following
Kam's seminal paper on autocorrelations, several procedures based on
correlations and moments have been proposed for cryo-EM and related modalities,
e.g.,
~\cite{bandeira2017estimation,bendory2017bispectrum,abbe2018multireference,levin20183d,perry2019sample,saldin2011reconstructing,saldin2010structure,sharon2020method,huang2022orthogonal}.


{
\begin{remark}[Relation of model of this paper to cryo-EM]
The model \eqref{Meq} considered in this paper involves a large noisy
measurement $M$ that contains many instances of a 2D target image at arbitrary
locations and random orientations. This model is a simplified version of cryo-EM
data that, informally speaking, consists of a large noisy measurement that
contains many tomographic projections of a 3D density at arbitrary locations and
random orientations. While the 2D model we study is not directly applicable to
cryo-EM data,  it does represent a step towards understanding the application of
invariant feature based approaches for cyro-EM by building upon past
work on multi-target detection
\cite{bendory2019multi,kreymer2021two, kreymer2021approximate,lan2019multi,
shalit2021generalized}. Moreover, the model considered in this paper
corresponds to a degenerate case in cryo-EM in which the molecule has a
preferred orientation. Random conical tilt \cite{radermacher1987three} is a
classical reconstruction method in cryo-EM that assumes a preferred
orientation. The model considered in this paper has recently been extended to
random conical tilt \cite{lan2021a}  which does have direct potential
applications.
\end{remark}
}

\subsection{Related work}
The problem addressed in this paper---with rotated and translated iterations
of~$f$ within~$M$---extends previous works on the MTD
model~\cite{bendory2019multi,lan2019multi}. In particular, we extend~\cite{marshall2020image} by providing new theoretical understanding of a
1-dimensional model, and demonstrating empirically that reconstruction is
possible from a measurement $M$ of the form~\eqref{Meq}. This is an
important step toward the reconstruction of molecules in
the undetectable domain. More generally, it attests to the possibility of direct
image estimation
from measurements so that limitations on particle picking do not
necessarily translate to limitations on structure determination.

We mention that our results were recently extended, after this paper appeared
online, to account for an arbitrary distribution of the target
images~\cite{kreymer2021two}. In addition,  an approximate
expectation-maximization algorithm for the MTD model with rotations was
developed in~\cite{lan2019multi,kreymer2021approximate}, and a generalized method of moments
framework was designed in~\cite{shalit2021generalized}.

\section{One-Dimensional Problem} \label{sec:1d}
Before considering the two-dimensional problem~\eqref{Meq}, we introduce
an analogous problem in one dimension. This simplified version will allow us to
develop intuition for the autocorrelation framework we
devise for the two-dimensional case.

\subsection{Measurement} \label{statement1d} Let $F :
\mathbb{Z} \rightarrow \mathbb{R}$ be a one-dimensional target signal supported on $\{-n,\ldots,n-1\}$, and
$F_\tau : \mathbb{Z} \rightarrow \mathbb{R}$ be the result of cyclically rotating the support of $F$. 
That is, $F_\tau(x) = F((x +\tau) \bmod 2 n)$ for $x
\in \{-n,\ldots,n-1\}$, where we consider an integer modulo $2n$ to be an
element of $\{-n,\ldots,n-1\}$, and $F_\tau(x) = 0$ when $x \in \mathbb{Z}
\setminus \{-n,\ldots,n-1\}$. 
Here, we will work with { a} one-dimensional analogue  of the
two-dimensional measurement defined in \eqref{Meq}, 
where the measurement~$M : \{1,\ldots,m\} \rightarrow \mathbb{R}$ is given by
\begin{equation}
\label{M1d}
M(x) = \sum_{j=1}^p F_{\tau_j}(x - x_j) + \varepsilon(x), 
\end{equation}
where $\tau_1,\ldots,\tau_p \in \{-n,\ldots,n-1\}$ are uniformly random
cyclic shifts; $x_1,\ldots, x_p \in \{n+1,\ldots, m-n+1\}$ are arbitrary
translations;
and $\varepsilon$ is i.i.d.\ Gaussian noise on $\{1, \ldots, m\}$ with mean zero and
variance $\sigma^2$. 
We plot an examples of the 1-dimensional measure $M$ with three different
levels of noise in Figure \ref{fig1D}.

\begin{figure}[ht!]
\centering
\begin{tabular}{ccc}
\includegraphics[width=.29\textwidth]{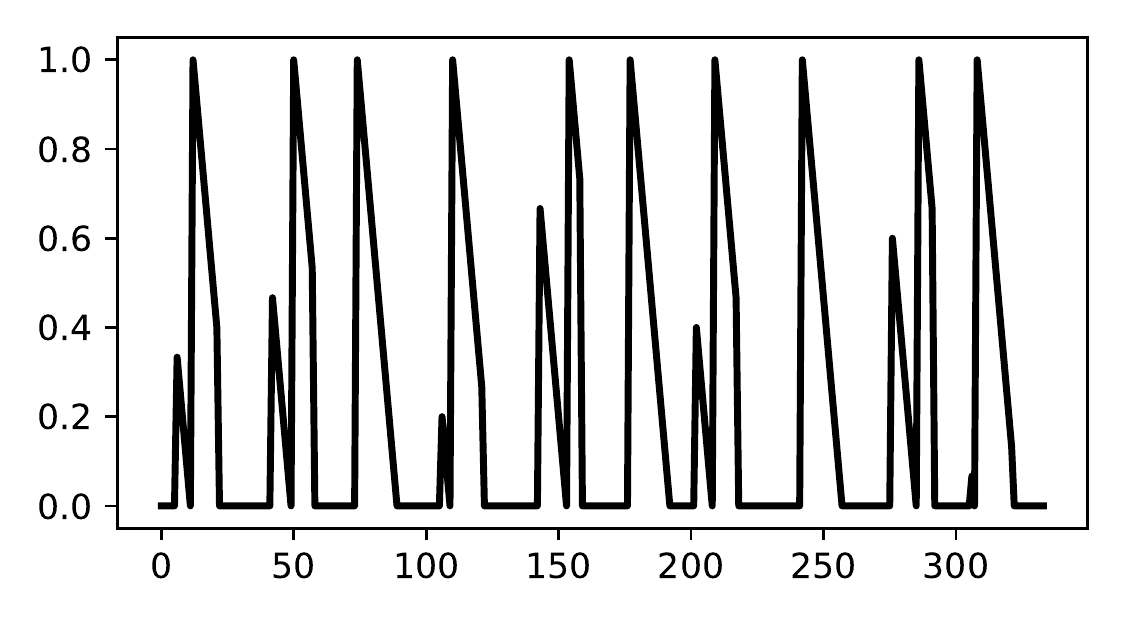} &
\includegraphics[width=.29\textwidth]{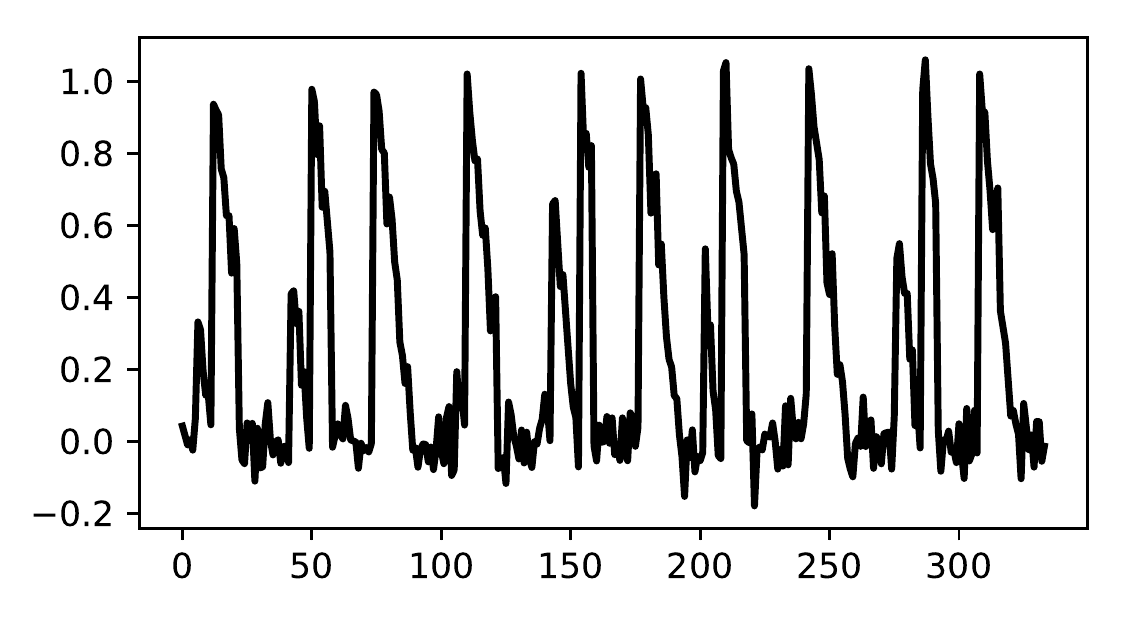} &
\includegraphics[width=.29\textwidth]{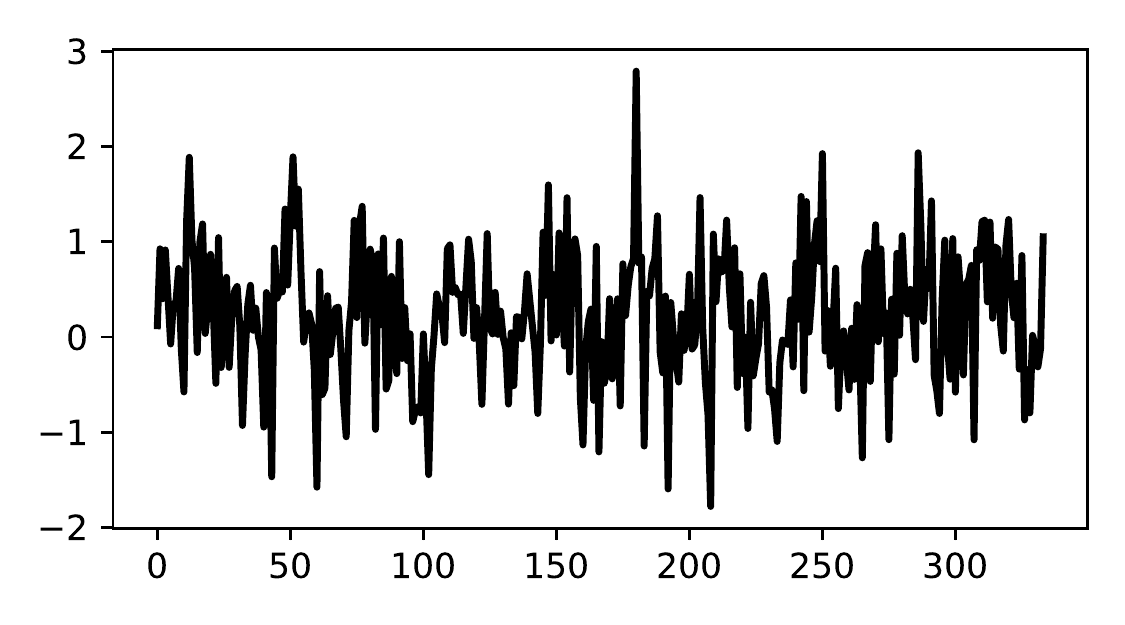} 
\end{tabular}
\caption{An example of the 1-dimensional measurement $M$ defined in \eqref{M1d} 
with $\text{SNR}=\infty$ (left), $\text{SNR}=10^2$ (middle), and
$\text{SNR}=1$ (right), where $\text{SNR}:= {(2} n{)}^{-1} \sum_x
F(x)^2/\sigma^2$.} \label{fig1D}
\end{figure}

The circular shifts $\tau$ are the one-dimensional analogues of the
two-dimensional rotations $\phi$ in~\eqref{Meq}. To extend the
previous assumptions about the target separation and bounded density to this
one-dimensional formulation, we assume that $|x_j - x_k|
\ge 4n$ for all $k \not = j$ and that $n p /m = \gamma > 0$. In Theorem~\ref{autos}, we will additionally impose that the
discrete Fourier transform (DFT) of $F$ is non-vanishing. 

As above, our objective is to estimate the function $F$ from the measurement~$M$
in the low SNR regime. In particular, we would like to show that $F$ can be
reliably and accurately estimated from $M$ at any fixed level of noise, which
might be arbitrarily high, as long as the size of the measurement $m$ is
sufficiently large. The reliability and accuracy of this estimate will be
quantified below. The approach is based on seeking features of $F$ that
determine the function and are invariant to translations $F(x) \mapsto F(x +
x')$ and circular shifts of the support $F(x) \mapsto F_\tau(x)$. The
construction of these invariant features is based on autocorrelation analysis.

\subsection{Invariant features} \label{invariant1d}
Recall that $F : \mathbb{Z} \rightarrow \mathbb{R}$ is supported on
$\{-n,\ldots,n-1\}$ and $F_\tau$ is a rotated version of $F$. We can define
features of $F$ that are invariant to rotations and translations. The
most direct example is the mean of the function 
\begin{equation} \label{TF}
T_F = \frac{1}{2n} \sum_{x = -n}^{n - 1} F(x).
\end{equation}
Motivated by autocorrelation analysis, the mean above can also be interpreted as
the first-order autocorrelation. The rotationally-averaged second-order
autocorrelation $U_F: \{-2n, \ldots, 2n-1\} \to \mathbb{R}$ is defined
by
\begin{equation*}
U_F(x_1) = \frac{1}{2n} \sum_{\tau = -n}^{n - 1} \frac{1}{2n} \sum_{x = -n}^{n - 1} F_{\tau}(x) F_{\tau}(x + x_1).
\end{equation*}
Considering the sum geometrically { (or by a change of variables)  } we observe that $U_F(x_1)$ is only a function
of the magnitude $|x_1|$, and so, cannot contain sufficient information to
recover $F$.  Thus, the critical invariant is the rotationally-averaged
third-order autocorrelation $V_F :
\{-2n,\ldots,2n-1\}^2 \rightarrow \mathbb{R}$, which is defined by
\begin{equation} \label{Teq}
V_F(x_1,x_2) = \frac{1}{2n} \sum_{\tau = -n}^{n-1} \frac{1}{2n} \sum_{x = -n}^{n - 1} 
F_\tau(x) F_\tau(x + x_1) F_\tau(x + x_2).
\end{equation}
By construction, both $U_F(x_1)$ and $V_F(x_1, x_2)$ are invariant under
translations $F(x)$ $\mapsto F(x-x')$ and rotations $F(x) \mapsto F_{\tau'}(x)$.
That is, $U_F = U_G$ and $V_F = V_G$ 
when $G(x) = F_{\tau'}(x + x')$ for any $\tau' \in \{-n,\ldots,n-1\}$ and
$x' \in \mathbb{Z}$.  

\subsection{Estimation from measurement} \label{measurement1d}
The function $V_F : \{-2n,\ldots,2n-1\}^2 \rightarrow \mathbb{R}$ can be estimated
from a measurement $M : \{1,\ldots,m\} \rightarrow \mathbb{R}$ of the form
described in \S \ref{statement1d}. For simplicity, let us extend the separation
condition so that it also holds periodically in the sense that $|x_{k_1} -
x_{k_2} - m| > 4n$.  We define the third-order autocorrelation of the
measurement $A_M : \{-2n\ldots,2n-1\}^2
\rightarrow \mathbb{R}$ by
\begin{equation} \label{amauto3}
A_M(x_1,x_2) = \\ \frac{1}{m} \sum_{x =1}^m M(x) M(x+x_1 \bmod m) M(x+x_2 \bmod m),
\end{equation}
where an integer modulo $m$ is taken as an element of $\{1,\ldots,m\}$.  

The following lemma shows that $V_F$ can be estimated from $A_M$ (namely, from
the data) if $m$ is much larger than $\sigma^6$. Information theoretic results
that were derived for a closely related model called multi-reference alignment
indicate that this is the optimal estimation rate in the low SNR regime where
$m,\sigma\to\infty$ while $\gamma$ and $n$ are
fixed~\cite{abbe2018estimation,bandeira2017estimation,perry2019sample}. 

\begin{lemma} \label{lem1}
Suppose that $|F| < F_{max}$ everywhere for some constant $F_{max}>0$. Under the
one-dimensional model~\eqref{M1d}, we have: 
$$
    \mathbb{E}( A_M(x_1,x_2)) = \frac{\gamma}{n} V_F(x_1, x_2) +  2 \gamma T_F
\sigma^2 (\delta_0 (x_1 - x_2) + \delta_0 (x_1) + \delta_0 (x_2)),
$$
and
\begin{equation*}
\Var \left( A_M(x_1,x_2) \right) = \mathcal{O} \left(\frac{n}{m} \left(
\gamma F_{max}^6 + \sigma^6 \right) \right),
\end{equation*}
where the expectation and variance are taken with respect to the random cyclic
shifts and the Gaussian noise, and $\delta_0(x) = 1$ when $x = 0$ and $\delta_0(x)
=0$ otherwise.	
\begin{proof}
	See Appendix~\ref{sec:technical_lemmas}.	
\end{proof}

\end{lemma}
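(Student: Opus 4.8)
The plan is to split the measurement as $M = S + \varepsilon$, where $S(x) = \sum_{j=1}^{p} F_{\tau_j}(x-x_j)$ is the clean signal and $\varepsilon$ the noise, and to expand the triple product $M(x)\,M(x+x_1 \bmod m)\,M(x+x_2\bmod m)$ into the eight terms obtained by choosing $S$ or $\varepsilon$ in each of the three factors. Since $\varepsilon$ is independent of $S$ (a function of the shifts $\tau_j$ only) and has mean zero, the three terms with exactly one $\varepsilon$ factor and the term with three $\varepsilon$ factors all vanish in expectation --- the last one because every third moment of a mean-zero Gaussian field is zero, whether or not the three evaluation points collide. This leaves the $SSS$ term and the three $S\varepsilon\varepsilon$ terms to produce the asserted mean.

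For the $SSS$ contribution, expand $S$ over the $p$ copies and observe that a summand $F_{\tau_j}(x-x_j)F_{\tau_k}(x+x_1-x_k)F_{\tau_l}(x+x_2-x_l)$ can be nonzero only if the three length-$2n$ windows $x_j+\{-n,\dots,n-1\}$, $(x_k-x_1)+\{-n,\dots,n-1\}$ and $(x_l-x_2)+\{-n,\dots,n-1\}$ share a point; because $|x_1|,|x_2|\le 2n$ and the $x_j$ are $4n$-separated (also periodically), already pairwise intersection forces $j=k=l$. Hence only the $p$ diagonal terms survive, and for each of them the change of variables $y=x-x_j$ together with averaging over the uniform shift $\tau_j$ turns $\tfrac1m\sum_x F_{\tau_j}(x-x_j)F_{\tau_j}(x+x_1-x_j)F_{\tau_j}(x+x_2-x_j)$ into a constant multiple of $V_F(x_1,x_2)$; summing the $p$ identical contributions and matching the normalisation through $\gamma=np/m$ gives the $V_F$ term. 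For each $S\varepsilon\varepsilon$ term, independence gives a product $\mathbb{E}[S(\cdot)]\,\mathbb{E}[\varepsilon(\cdot)\varepsilon(\cdot)]$: the shift-average $\tfrac1{2n}\sum_\tau F_\tau$ equals $T_F$ on a window, so $\mathbb{E}[S(x)]=T_F$ on the $p$ disjoint windows and $0$ elsewhere, while $\mathbb{E}[\varepsilon(x+x_a)\varepsilon(x+x_b)]=\sigma^2$ precisely when $x_a\equiv x_b \pmod m$, i.e.\ (since $|x_a-x_b|<4n<m$) when $x_a=x_b$. Summing over the $2np$ window points gives $\tfrac{2np}{m}T_F\sigma^2=2\gamma T_F\sigma^2$ times the corresponding Kronecker delta, producing the three $\delta_0$ terms.

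For the variance I would start from
\[
  \Var(A_M(x_1,x_2)) = \frac{1}{m^2}\sum_{x,x'} \operatorname{Cov}\big(
    M(x)M(x+x_1)M(x+x_2),\ M(x')M(x'+x_1)M(x'+x_2)
  \big)
\]
(indices mod $m$) and show that the nominal $m^2$ pairs $(x,x')$ collapse to $O(mn)$. The cleanest route is to split $A_M$ into the eight pieces above, bound the variance of each, and control all cross-covariances by Cauchy--Schwarz, so that the total is at most a fixed multiple of the sum of the eight individual variances. The $SSS$ piece equals $\tfrac1m\sum_{j=1}^p a_{F_{\tau_j}}(x_1,x_2)$ with $a_{F_\tau}(x_1,x_2)=\sum_y F_\tau(y)F_\tau(y+x_1)F_\tau(y+x_2)$, a sum of $p$ independent variables each bounded by $2nF_{max}^3$ in absolute value, so its variance is at most $m^{-2}p(2nF_{max}^3)^2=4\gamma F_{max}^6\,n/m$. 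For the pieces involving $\varepsilon$, independence of $\varepsilon$ and $S$ together with Isserlis' theorem reduces the relevant Gaussian moments to sums of products of $\sigma^2$'s, each supported on the set of pairs $(x,x')$ that force coincidences among the at most six points $x+x_a,\,x'+x_a$; that set has size $O(m)$ (or $O(mn)$ after the accompanying covariance of signal factors forces $x,x'$ into the same window), and the leftover signal factors are bounded by powers of $F_{max}$ on the $O(np)$ window sites, so each such piece has variance $O\big(m^{-1}(\sigma^6+\gamma n F_{max}^4\sigma^2+\gamma n F_{max}^2\sigma^4)\big)$. Using $\gamma\le\tfrac14$ (a consequence of the $4n$-separation) and $F_{max}^a\sigma^b\le F_{max}^6+\sigma^6$ for $a+b=6$, every one of the eight variances is $\mathcal{O}\big(\tfrac nm(\gamma F_{max}^6+\sigma^6)\big)$, and summing gives the claim.

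The expectation is essentially bookkeeping --- organising the eight terms and invoking independence, the Gaussian moment identities, and the definition of $V_F$ --- so the main obstacle is the variance estimate: for each assignment of $S/\varepsilon$ to the six factors one must carefully identify which pairs $(x,x')$ give non-negligible covariance, and track at every step both the combinatorial count ($O(m)$ or $O(mn)$, rather than $O(m^2)$, pairs) and the exact powers of $F_{max}$ and $\sigma$, so that after combining with a weighted AM--GM inequality one lands inside $\mathcal{O}(\tfrac nm(\gamma F_{max}^6+\sigma^6))$ rather than something weaker. A secondary technical point is the cyclic reduction modulo $m$: the separation is imposed periodically precisely so that no signal window wraps across the boundary into another copy, and one should invoke this when performing the change of variables $y=x-x_j$ in both the mean and the variance computations.
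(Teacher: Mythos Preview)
Your expectation argument is essentially the paper's: both expand the triple product, discard the terms with an odd number of noise factors, reduce the pure-signal part to the diagonal $j_1=j_2=j_3$ via the separation condition, and evaluate the three $S\varepsilon\varepsilon$ terms by independence.

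For the variance you take a genuinely different route. The paper introduces an independent copy $\tilde{M}$ of $M$, writes
\[
\Var\big(A_M(y_1,y_2)\big)=\mathbb{E}\big[A_M(y_1,y_2)\,(A_M(y_1,y_2)-A_{\tilde{M}}(y_1,y_2))\big],
\]
and observes that in the resulting double sum over $(x,y)$ the summand vanishes whenever the triples $M(x)M(x+y_1)M(x+y_2)$ and $M(y)M(y+y_1)M(y+y_2)$ are independent, which (up to the precise constant in the window width) happens once $|x-y|$ exceeds a fixed multiple of $n$. This collapses the double sum to $O(mn)$ pairs in a single step; one application of Cauchy--Schwarz then reduces everything to the pointwise sixth moment $\mathbb{E}[(M(x)M(x+y_1)M(x+y_2))^2]=\mathcal{O}(\gamma F_{max}^6+\sigma^6)$, and the bound follows immediately. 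Your eight-piece decomposition with Isserlis' theorem for the Gaussian blocks and a separate pair-count for each piece is longer but correct, and it has the merit of exposing exactly which powers $F_{max}^a\sigma^b$ each piece contributes before the final AM--GM merge. The paper's approach buys brevity: the locality-of-dependence observation plus one crude moment bound replace all of your case analysis, at the cost of not seeing the individual contributions.
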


The retrieval of $F$ from $V_F$, combined with Lemma \ref{lem1}, would result in the extraction of $F$ from the measurement $M$ up to a rotation, given a sufficiently large measurement.

\subsection{Recovery from invariant features}
The DFT of the function $F : \mathbb{Z}
\rightarrow \mathbb{R}$ considered as a function on its support
$\{-n,\ldots,n-1\}$ is defined by
\begin{equation} \label{aeq}
	a_k := \sum_{x=-n}^{n-1} F(x) e^{-2\pi i k x/(2n)}, \quad k\in \{-n,\ldots,n-1\}.
\end{equation}
We now show that $V_F$ determines $F$ via a closed form when its  DFT is non-vanishing. 
We remark that such
a non-vanishing condition is standard for problems related to autocorrelation
inversion, see for example~\cite{bendory2017bispectrum,perry2019sample}.

\begin{theorem} \label{autos}
Suppose that the DFT of $F$ expressed in \eqref{aeq} is non-vanishing. Then,
$\tilde{F}$ can be determined from $V_F$ via a closed form expression
{ (resulting from inverting a linear system only depending on $n$)}
such that $\tilde{F} = F_\tau$ for some $\tau \in \{-n,\ldots,n-1\}$. That is,
$F$ can be recovered up to a circular shift.
\end{theorem}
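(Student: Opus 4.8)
The plan is to strip the rotation average out of $V_F$ so as to expose an honest cyclic third-order autocorrelation, then pass to the Fourier (bispectrum) domain and invert there by frequency marching, with the non-vanishing hypothesis doing the essential work.

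\emph{Reduction to a cyclic autocorrelation.} Fix $x_1,x_2$ and a summation index $x\in\{-n,\dots,n-1\}$ in \eqref{Teq}. If $x+x_1$ or $x+x_2$ falls outside $\{-n,\dots,n-1\}$, the corresponding factor $F_\tau$ vanishes for every $\tau$; otherwise the substitution $u=(x+\tau)\bmod 2n$, which is a bijection of $\{-n,\dots,n-1\}$ for fixed $x$, turns $F_\tau(x)F_\tau(x+x_1)F_\tau(x+x_2)$ into $F(u)\,F((u+x_1)\bmod 2n)\,F((u+x_2)\bmod 2n)$, whose sum over $\tau$ is
\begin{equation*}
C_F(x_1,x_2):=\sum_{u=-n}^{n-1}F(u)\,F\big((u+x_1)\bmod 2n\big)\,F\big((u+x_2)\bmod 2n\big).
\end{equation*}
Hence $V_F(x_1,x_2)=(2n)^{-2}W_n(x_1,x_2)\,C_F(x_1,x_2)$, where $W_n(x_1,x_2)$ counts the $x\in\{-n,\dots,n-1\}$ with $x+x_1,x+x_2\in\{-n,\dots,n-1\}$; an elementary count gives $W_n(x_1,x_2)=2n-\max(0,x_1,x_2)+\min(0,x_1,x_2)\ge 1$ on $\{-n,\dots,n-1\}^2$. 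Dividing out this known weight recovers $C_F$ on a full period, hence everywhere (it is $2n$-periodic in each variable); together with the DFT below, this is the promised linear system depending only on $n$.

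\emph{Bispectrum inversion.} Taking the two-dimensional DFT of $C_F$ over $(\mathbb{Z}/2n\mathbb{Z})^2$ and using \eqref{aeq} with $a_{-k}=\overline{a_k}$ (valid since $F$ is real) gives the bispectrum
\begin{equation*}
b(k_1,k_2)=a_{k_1}\,a_{k_2}\,\overline{a_{k_1+k_2}},\qquad k_1,k_2\in\mathbb{Z}/2n\mathbb{Z}.
\end{equation*}
From $b(0,0)=a_0^3$ I recover $a_0$ as a real cube root, and from $b(k,-k)=|a_k|^2a_0$ each $|a_k|$ — legitimate because $a_0\ne 0$. Writing $a_k=|a_k|e^{i\psi_k}$, the relations $\arg b(k,1)=\psi_k+\psi_1-\psi_{k+1}$ for $k=1,\dots,n-1$ form a triangular system (equivalently, the closed-form recursion $\overline{a_{k+1}}=b(k,1)/(a_1a_k)$) that determines $\psi_2,\dots,\psi_n$ from the single free parameter $\psi_1$; since every $a_k\ne 0$, the recursion never divides by zero. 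Shifting $\psi_1$ by $2\pi/(2n)$ multiplies $a_k$ by $e^{2\pi ik/(2n)}$, i.e.\ circularly shifts $F$ by one, so $\psi_1$ is exactly the shift ambiguity. Imposing that the Nyquist coefficient $a_{-n}$ come out real — possible because $a_{-n}\ne 0$ — restricts $\psi_1$ to the $2n$ values giving integer shifts; choosing any one of them and inverting the DFT yields $\tilde F$ with Fourier coefficients $a_k e^{2\pi ik\tau/(2n)}$, that is, $\tilde F=F_\tau$ for some integer $\tau\in\{-n,\dots,n-1\}$.

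\emph{Where the difficulty lies.} The reduction step is routine bookkeeping, modulo verifying that $W_n>0$ on a full fundamental domain so that no Fourier content of $F$ is erased. The substance is the bispectrum inversion, and the subtle point is that frequency marching by itself only pins $F$ down up to a \emph{continuous} phase rotation $a_k\mapsto a_ke^{ik\beta}$, which is an integer circular shift only when $\beta\in\frac{\pi}{n}\mathbb{Z}$; forcing this is precisely what the reality constraint on the Nyquist coefficient accomplishes, which is why the non-vanishing hypothesis at $k=-n$ (and at $k=0$, for the cube root and the magnitudes) is essential rather than merely convenient.
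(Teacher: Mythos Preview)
Your proof is correct, and it takes a genuinely different route from the paper's. The paper embeds $F$ in the period-$4n$ interval $\{-2n,\dots,2n-1\}$, expresses $V_F$ as an average of third-order autocorrelations $A_{F_\tau}$ taken there, and relates the period-$4n$ Fourier coefficients $b_m$ back to the $a_k$ via the subsampling identity $b_{2k}=a_k$; it then cites the literature for bispectrum inversion. You instead observe that the inner sum of \eqref{Teq} factors as a known combinatorial weight $W_n(x_1,x_2)$ times the honest period-$2n$ cyclic triple correlation $C_F$, and since $W_n\ge 1$ on a full fundamental domain you can divide it out and pass straight to the $2n$-point bispectrum $a_{k_1}a_{k_2}\overline{a_{k_1+k_2}}$. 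This is more elementary and avoids the bookkeeping of two different Fourier grids. Your explicit frequency-marching argument, and in particular the observation that the reality of the Nyquist coefficient $a_{-n}$ is what collapses the continuous phase ambiguity $a_k\mapsto a_ke^{ik\beta}$ to the $2n$ integer cyclic shifts, is a point the paper leaves to its references; spelling it out is a nice addition and makes transparent exactly where the non-vanishing hypothesis at $k=0$ and $k=-n$ is used.
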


\begin{proof}
Let $A_{F} : \{-2n,\ldots,2n-1\}^2 \rightarrow \mathbb{R}$ designate the third-order autocorrelation
\begin{equation} \label{eq_auto}
A_F(x_1, x_2) =\\ \frac{1}{2n} \sum_{x = -2n}^{2n-1} F(x) F((x + x_1)\bmod 4n) F((x + x_2) \bmod
4n ),
\end{equation}
where an integer modulo $4n$ is taken to be element of
$\{-2n,\ldots,2n-1\}$. Observe that we have
$$
V_F(x_1,x_2) = \frac{1}{2n} \sum_{\tau = -n}^{n-1} A_{F_\tau}(x_1,x_2).
$$
Indeed, since $F$ is supported on $\{-n,\ldots,n-1\}$, taking it as a periodic
function in \eqref{eq_auto} does not change the result. Let $b_m$ denote the
Fourier coefficients of $F$ considered as a periodic function on
$\{-2n,\ldots,2n - 1\}$; that is,
$$
b_m := \sum_{x=-n}^{n-1} F(x) e^{-2 \pi i m x /(4n)}.
$$
By Fourier inversion on the interval $\{-2n,\ldots,2n-1\}$, we have
\begin{equation} \label{ffinv}
F(x) = \frac{1}{4n} \sum_{m=-2n}^{2n-1} b_m e^{2\pi i m x/(4n)}.
\end{equation}
Substituting the representation of $F(x)$ in terms of the coefficients $b_m$
into $A_F$ and summing over $x$ yields 
$$
A_F(x_1,x_2) = \frac{1}{2n} \frac{1}{(4n)^3} \cdot \\ \sum_{m_1,m_2 = -2n}^{2n-1} b_{m_1} b_{m_2} b_{-m_1-m_2} e^{2\pi i (m_1 x_1 + m_2 x_2)/(4n)}.
$$
Next, by taking the two-dimensional DFT of $A_F(x_1,x_2)$ on
$\{-2n,\ldots,2n-1\}^2$, we can recover $b_{m_1} b_{m_2} b_{-m_1-m_2}$ for $m_1,
m_2 \in \{-2n,\ldots,2n-1\}$:
$$
b_m = \sum_{j=-n}^{n - 1}\left(\frac{1}{2n} \sum_{k=-n}^{n-1} a_k e^{2\pi i k
j/(2n)} \right) e^{-2\pi i m j/(4n)} \\ = \sum_{k=-n}^{n-1} \gamma_{m,k} a_k,
$$
where 
$$
\gamma_{m,k} := \frac{1}{2n} \sum_{j=-n}^{n-1} e^{2\pi i kj/(2n)}e^{-2\pi imj/(4n)}.
$$
Observe that if $m = 2k'$ we have $\gamma_{m,k'} = 1$ and $\gamma_{m,k} = 0$ if
$k \not = k'$. It follows that
$$
b_{2k_1} b_{2k_2} b_{-2k_1-2k_2} = a_{k_1} a_{k_2} a_{-k_1-k_2}
$$
for $k_1,k_2 \in \{-n,\ldots,n-1\}$. The quantity $a_{k_1} a_{k_2} a_{-k_1-k_2}$, called the bispectrum of the function $F$,  is invariant under cyclic shifts
of the underlying function $F$ and determines $F$ uniquely, up to a global cyclic shift ~\cite{bendory2017bispectrum,sadler1992shift,tukey1953spectral}. 
{
More precisely, by taking the logarithm of the bispectrum we arrive at a linear
system of equations which is full rank after the cyclic shift ambiguity is
removed, see \cite[\S IV.C]{bendory2017bispectrum}. Therefore, for each $n$,
there is a fixed linear transform that determines the Fourier coefficients of
$F$, up to a phase ambiguity. Since the reduction of $V_F$ to the bispectrum can
also be accomplished by a linear transform, for any fixed $n$ composing these
linear transformations together with Fourier inversion gives a closed form
expression for determining $F$ up to cyclic shift from $V_F$.
}
Thus, given $V_F$, we can determine $\tilde{F} = F_{\tau}$ for some $\tau \in
\{-n, \ldots, n-1\}$, as desired.
\end{proof}
{
\begin{example}
To illustrate the approximation result of Lemma~\ref{lem1} and Theorem~\ref{autos}, we present a basic numerical example.  We use the signal from Figure
\ref{fig1D} with $\text{SNR} = 10^2$. Next, we form a measurement of the form~\eqref{M1d} with various numbers of samples $p$ of the given function. We use
the identities described in the proof of Theorem \ref{autos} to approximate the
bispectrum $a_{k_1} a_{k_2} a_{-k_1 -k_2}$ for $k_1,k_2 \in \{-n,\ldots,n-1\}$
of the given signal $F$. We plot the relative error of the bispectrum extracted from
the measurement compared to the ground truth, see Figure \ref{fignew}. The error decreases as $1/\sqrt{p}$, as expected by the law of large numbers.
The
function can be recovered from the bispectrum using a variety of standard
methods, see  \cite{bendory2017bispectrum}.
\begin{figure}[ht!]
\centering
\includegraphics[width=.38\textwidth]{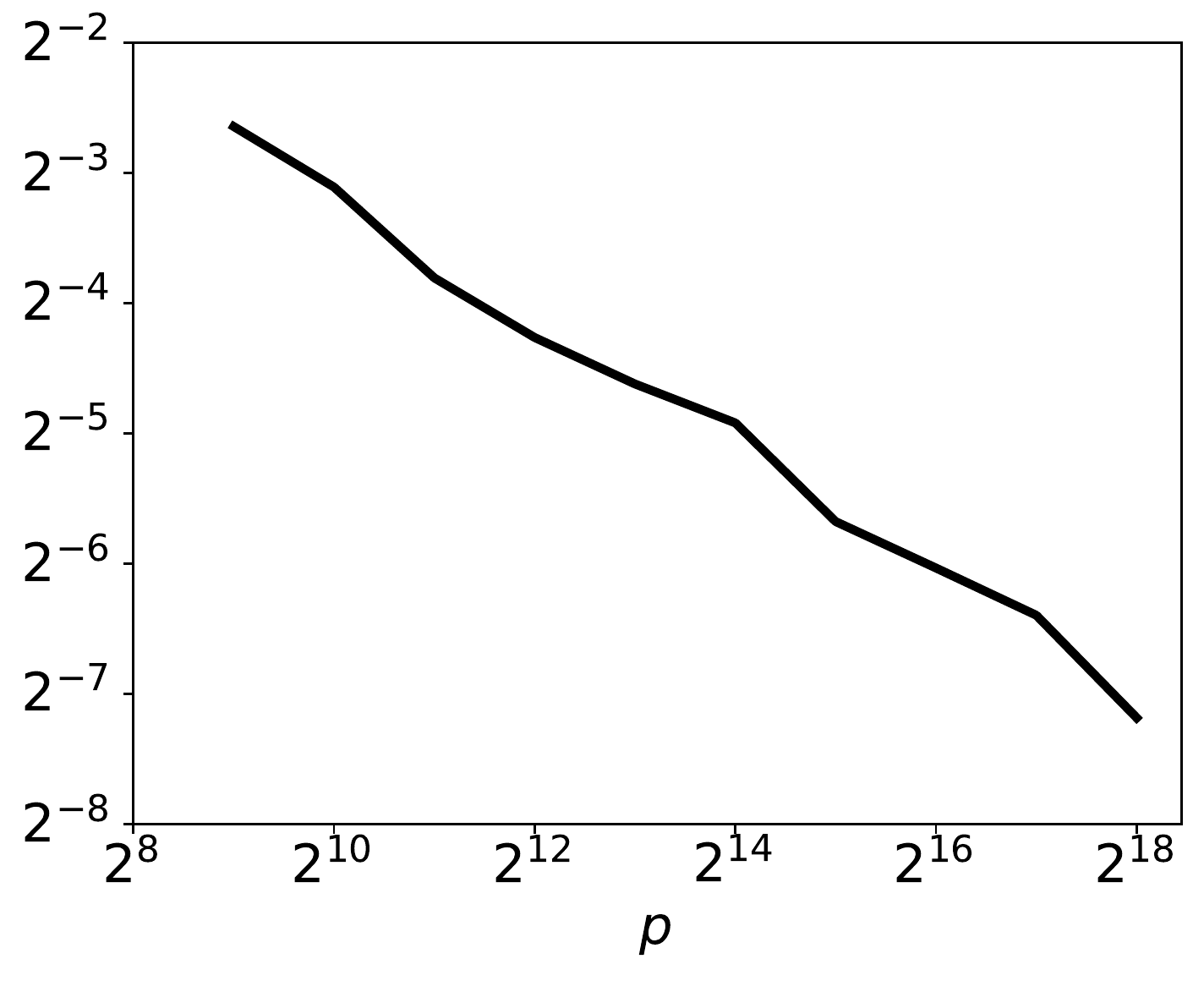} 
\caption{Relative error of bispectrum derived from measurement 
$M$ using various
number of samples $p$ averaged over 10 trials. 
The error decreases as $1/\sqrt{p}$, as expected by the law of large numbers.}
\label{fignew} 
\end{figure}
\end{example}
}
{
\begin{remark}[Discretization model]
In the above model, we consider a function~$F$ defined on a gird that is
transformed by on grid translations. One potential extension of this model is to
consider off  grid translations by assuming that $F$ represents samples from an
underlying function $f$ defined on the real-line;
more precisely, the model
\eqref{M1d} could be extended by introducing a shift parameter $\delta \in
[0,1/n)$  and defining
$$
F^\delta(x) = f(x/n + \delta),
$$
for a discretization parameter $n \in \mathbb{Z}_{>0}$
and an underlying function $f : \mathbb{R} \rightarrow \mathbb{R}$ that is
supported on $[-1,1]$. With this notation, the measurement model
\eqref{M1d} could be extended by defining $M : \{1,\ldots,m\} \rightarrow
\mathbb{R}$ by 
$$
M(x) = \sum_{j=1}^p F_{\tau_j}^{\delta_j}(x - x_j) + \varepsilon(x),
$$
where $\tau_j \in \{-n,\ldots,n-1\}$ is a random cyclic rotation, $x_j \in
\{n+1,\ldots,m-n+1\}$ are arbitrary translations, $\varepsilon$ is i.i.d.
Gaussian noise on $\{1,\ldots,m\}$, and $\delta_j \in [0,1/n]$ is a random shift.
Under this model, the third-order autocorrelation $A_M(x_1,x_2)$ defined in
\eqref{amauto3} would satisfy an analogous version of Lemma~\ref{lem1},
where the features $T_F$ and $V_F$ are replaced by quantities defined with
appropriate integrals instead of sums. Studying this extended model would
require quantifying an additional source of error when trying to determine $f$  
from its third-order autocorrelation; in particular, it would be necessary to 
make assumptions justifying why the Fourier inversion formula \eqref{ffinv} 
approximately holds (for example, one could assume that $f$ is Lipchitz
continuous, or assume that $f$ and its derivatives are Lipchitz continuous up
to order $k$ such that classical approximation theory results could be
employed). 

In this paper, we focus on translations on the discretization grid
(for both the 1D and 2D models we consider) to avoid dealing with this
additional source of approximation error. Our goal in considering on grid
translations is to study the simplest possible model that still captures the
essence of the signal processing problem of interest: MTD in a setting where
there are two different types of random linear actions.
 Extending the results of this
paper to handle arbitrary shifts would be a necessary extension if the presented
approaches are adapted for an application problem involving real data.
\end{remark}
}

\section{Two-dimensional problem}

After having established the theoretical foundation for the one-dimensional
problem above, we aim to extend the recovery of the underlying function $f$ to
two dimensions. In order to make this estimation tractable, it is necessary to
make regularity assumptions on the function $f$; we build the foundation for 
these assumptions below.

\subsection{Invariant features in the continuous setting}
As in \S \ref{invariant1d}, we define the continuous two-dimensional analogues for the features of $f$ that are invariant under
translations and rotations. As before, the first invariant is the
mean of the function
$$
q_f := \int_{\mathbb{R}^2} f(x) dx.
$$
Letting $f_\phi$ be the rotation of $f$ by angle $\phi$ about the origin, we define rotationally-averaged second-order autocorrelation
$r_f : \mathbb{R}^2 \rightarrow \mathbb{R}$ by
$$
r_f(x_1) :=  \frac{1}{2\pi} \int_0^{2\pi} \int_{\mathbb{R}^2}
f_\phi(x) f_\phi(x +x_1) d x  d \phi.
$$
Finally, to gain enough information for the recovery of $f$, the rotationally-averaged third-order autocorrelation $s_f:
\mathbb{R}^2 \times \mathbb{R}^2 \rightarrow \mathbb{R}$ is
$$
s_f(x_1,x_2) := \frac{1}{2\pi} \int_0^{2 \pi} \int_{\mathbb{R}^2} 
f_\phi(x) f_\phi(x+x_1) f_\phi(x+x_2) dx d\phi. 
$$
In this case, observe that $s_f$ is a function of $|x_1|,|x_2|$ and the angle
$\theta(x_1,x_2)$ between~$x_1$ and $x_2$. Geometrically, as a function of three 
variables, $s_f$ potentially contains enough information to recover $f$.

\subsection{Invariant features in the discrete setting} 
As in the one-dimensional case, we will focus on the recovery of some
discretization of $f$ from some discretization of~$s_f$ under similar
assumptions to those in Theorem~\ref{autos}. Restricting our attention
to this problem is consistent with the fact that actual measurements are
discretized over a pixel grid. Conveniently, this also considerably simplifies
the presentation of the method.

We define the discretization $F_\phi : \mathbb{Z}^2 \rightarrow \mathbb{R}$ of
$f_\phi$ by
$$
F_\phi(x) = f_\phi(x/n), \quad \text{for} \quad x \in \mathbb{Z}^2,
$$
where $n$ is a fixed integer that determines the sampling resolution. We
define the discrete rotationally-averaged third-order autocorrelation $S_f :
\mathbb{Z}^2 \times \mathbb{Z}^2 \rightarrow \mathbb{R}$ by
\begin{equation} \label{SF}
S_F(x_1,x_2) := \\ \frac{1}{2\pi} \int_0^{2\pi} \frac{1}{4 n^2}  \sum_{x \in \mathbb{Z}^2}
F_\phi(x) F_\phi(x + x_1) F_{\phi}(x+x_2) d\phi.
\end{equation}
Since $f_\phi$ is supported on the open unit disc $\{x \in \mathbb{R}^2: |x| <
1\}$, it follows that $F_\phi$ is supported on $\{ x \in \mathbb{Z}^2 : |x| < n\}$, and
$S_F(x_1,x_2)$ is supported on 
$$
\mathcal{X} := \{-2n,\ldots,2n-1\}^2 \subset \mathbb{Z}^2,
$$
which contains $(2n)^2$ points. 

\subsection{Estimation from measurement} \label{2dautodef}
Suppose that $M : \{1,\ldots,m\}^2 \rightarrow \mathbb{R}$, 
a measurement of the form in \eqref{Meq}, is given.
We define the third-order autocorrelation of  $M$ as $A_M : \mathbb{Z}^2 \times
\mathbb{Z}^2 \rightarrow \mathbb{R}$ by
$$
A_M({x}_1, {x}_2) := \frac{1}{m^2} \sum_{{x} \in \mathbb{Z}^2} 
M({x}) M({x}+{x}_1) M({x}+{x}_2).
$$
Recall that the measurement $M : \{1,\ldots,m\}^2 \rightarrow \mathbb{R}$ is
defined by
$$
M({x}) = \sum_{j=1}^p F_{\phi_j}(x-x_j) + 
\varepsilon({x}),
$$
where $\phi_j$ are random angles, $x_j$ are translations, and $\varepsilon$ is
noise (see \S \ref{intro}). As before, we assume that images in $M$ are
separated by at least one image diameter according to
$$
| {x}_{j_1} - {x}_{j_2} | \ge 4n, \quad \text{for} \quad j_1 \not =
j_2,
$$
and that the density of the target images in the measurement is $p n^2/m^2 = \gamma$ for a
fixed constant $\gamma > 0$. Under these assumptions, it is straightforward to
show that for any fixed level of noise $\sigma^2$, fixed signal radius
$n$ and fixed $\gamma$,
\begin{equation}
\label{eq:A2S}
A_M({x}_1, {x}_2) \rightarrow \frac{\gamma}{2\pi} S_F({x}_1, {x}_2) + \\ \frac{\gamma}{2\pi} \sigma^2
\mu_F \big( \delta({x_1}) + \delta({x_2}) + \delta({x_1} - {x_2}) \big),
\end{equation}
as $m \rightarrow \infty$ (see for example~\cite{bendory2018toward}), where $\mu_F$ is the discrete mean of $F$ defined by
$$
\mu_F = \frac{1}{4n^2} \sum_{x \in \mathbb{Z}^2} F_{\phi}(x).
$$
As such, \eqref{eq:A2S} relates the third-order autocorrelation of the measurement
$A_M$ to the invariant features $S_F$ and $\mu_F$. In practice, $\sigma^2$ and $\gamma \mu_F$
can be estimated from $M$: $\sigma^2$ can be estimated by the variance of the
pixel values of $M$ in the low SNR regime, while $\gamma \mu_F$ can be estimated
by the empirical mean of $M$. As a result, $S_F$, a feature of the image, can be estimated from $A_M$, a feature of the measurement, up to a constant factor.

\subsection{Band-limited functions on the unit disc} \label{bandlimit2d}
The Dirichlet Laplacian eigenfunctions
on the unit disc $D = \{ (x,y) \in
\mathbb{R} : x^2 + y^2 \le 1 \}$ are solutions to the eigenvalue problem
$$
\left\{  \begin{array}{cc}
-\Delta \psi = \lambda \psi & \text{in } D \\
\psi = 0 & \text{on } \partial D,
\end{array}
\right.
$$
where $-\Delta = -(\partial_{x x} + \partial_{y y})$ is the Laplacian, and
$\partial D$ is the boundary of the unit disc. In polar coordinates
$(r,\theta)$, these eigenfunctions are of the form 
\begin{equation} \label{eq1}
\psi_{\nu,q}(r,\theta) = J_{\nu}\left( \lambda_{\nu, q} r \right) e^{i \nu \theta},
\end{equation}
where $\nu \in \mathbb{Z}_{\ge 0}$, $J_{\nu}$ is the $\nu$-th order Bessel function of
the first kind, and $\lambda_{n,q} > 0$ is the $q$-th positive root of $J_{\nu}$.
Recall that $J_{\nu}$ is a solution to the differential
equation 
$$
y''(r) +  \frac{1}{r} y'(r) + \left(1 - \frac{\nu^2}{r^2} \right) y(r) = 0.
$$
Therefore, by writing the Laplacian $-\Delta$ in polar coordinates, we have
$$
-\Delta \psi_{\nu,q}(r,\theta) = -\left( \partial_{rr} + \frac{1}{r} \partial_r
+ \frac{1}{r^2} \partial_{\theta \theta} \right) \psi_{\nu,q}(r,\theta) = 
\lambda_{\nu,q}^2 \psi_{\nu,q}(r,\theta),
$$
and, as such, $\lambda_{\nu,q}^2$ is the eigenvalue corresponding to the
eigenfunction $\psi_{\nu,q}$. Therefore, the projection operator 
$$
P_\lambda f = \sum_{(\nu,q) : \lambda_{\nu,q} \le \lambda} \frac{\langle f,
\psi_{\nu,q} \rangle}{\|\psi_{\nu,q}\|_{2}^2} \psi_{\nu,q}
$$
can be viewed as a low-pass filter for functions on the unit disc; we call
functions  that are invariant under this projection operator band-limited { functions}.

\subsection{Steerable bases} \label{steerable}
Recall that $f : \mathbb{R}^2 \rightarrow \mathbb{R}$ is supported on the unit
disc. Using the notation from \S \ref{bandlimit2d}, the assumption that $f$ is
band-limited on its support can be written as 
\begin{equation}
\label{model}
f(r,\theta) =	\sum_{(\nu,q): \lambda_{\nu,q} \le \lambda} \alpha_{\nu,q} \psi_{\nu,q}(r,\theta), \quad \text{for } r \le 1,
\end{equation}
where $\lambda > 0$ is the band-limit frequency, and $\alpha_{\nu,q}$ are
expansion coefficients. For each $\nu$, we define
$$
g_\nu(r,\theta) = \sum_{q : \lambda_{\nu,q} \le \lambda} 
	\alpha_{\nu,q} \psi_{\nu,q}(r,\theta) = \\ \left( \sum_{q : \lambda_{\nu,q} \le \lambda}
\alpha_{\nu,q} J_{\nu}\left(
\lambda_{\nu,q} r \right)\right) e^{i \nu \theta},
$$
so that we can write $f$ by
\begin{equation} \label{sumg}
f(r,\theta) = \sum_{\nu = -N}^{N}
g_\nu(r,\theta), 
\end{equation}
where $N := \max \{ \nu : \lambda_{\nu,1} \le \lambda\}$. 

The advantage of expressing a function in terms of Dirichlet Laplacian
eigenfunctions is that the basis is steerable---the effect of rotations on
expansion coefficients of the images are expressed as phase modulation.
Specifically, a steerable basis diagonalizes the rotation operator so
that the rotation $f_\phi(r,\theta) :=
f(r,\theta+\phi)$ of $f$ about the origin by angle $\phi$ can be computed
by multiplying each term in the sum in \eqref{sumg} by 
 $e^{i \nu \phi}$:
\begin{equation} \label{grot}
f_\phi(r,\theta) = \sum_{\nu = -N}^{N} g_\nu(r,\theta) e^{i \nu \phi}.
\end{equation}

From this point forward, we will switch between considering functions in polar coordinates
$f(r,\theta)$ or Cartesian coordinates $f({x})$, where 
${x} = (r \cos \theta, r \sin \theta )$, 
depending on which is more
convenient.

\subsection{Using the band-limited assumption} \label{bandlimit}
We now take advantage of the assumption that $f$ is band-limited on the unit
disc.  Let $\Psi_{\nu, q}: \mathcal{X} \rightarrow \mathbb{C}$ be the
discretization of the Dirichlet Laplacian eigenfunctions
$$
\Psi_{\nu, q}(x) = \psi_{\nu,q}(x/n),
$$ 
where $\psi_{\nu,q}$ is supported on the unit disc, as in \S \ref{bandlimit2d}.
With this notation, 
$$
F_\phi(x) = \sum_{(\nu,q): \lambda_{\nu,q} \le \lambda} \alpha_{\nu,q}
	\Psi_{\nu,q}(x) e^{i \nu \phi}.
$$
By considering $F_\phi$ and $S_F$ as functions on
$\mathcal{X}$, we can express their DFT 
$\hat{F}_\phi : \mathcal{X} \rightarrow \mathbb{C}$  by
$$
\hat{F}_\phi(k) = \sum_{x \in \mathcal{X}}
F_\phi(x) e^{-2\pi i  x\cdot k /(4n)}.
$$
Finally, we let $\hat{\Psi}_{\nu, q}: \mathcal{X} \rightarrow \mathbb{C}$ be the DFT of $\Psi_{\nu, q} : \mathcal{X} \rightarrow \mathbb{C}$
$$
\hat{\Psi}_{\nu,q}(k) = \sum_{x \in \mathcal{X}} \Psi_{\nu,q}(x) e^{-2\pi i
x\cdot k /(4n)}.
$$
Then, by the linearity of the DFT  it follows from the previous section that
$$
\hat{F}_\phi(k) =	\sum_{(\nu,q) \in \mathcal{V}}
\alpha_{\nu,q} \hat{\Psi}_{\nu, q}(k) e^{i \nu \phi},
$$
where $\mathcal{V} = \{ (\nu,q) : \lambda_{\nu,q} \le \lambda \}$.

\subsection{Discrete Fourier transform of invariant features}
The Fourier transform defined in the previous section can now be related to the Fourier transform of $S_F$. We define $\hat{S}_F : \mathcal{X} \times \mathcal{X} \rightarrow \mathbb{C}$ by
\begin{equation} \label{SFhat}
\hat{S}_F(k_1,k_2) :=  \sum_{x_1 \in \mathcal{X}} \sum_{x_2 \in \mathcal{X}}
S_F(x_1,x_2) e^{-2\pi i (k_1 \cdot x_1 + k_2 \cdot x_2)/ 4n},
\end{equation}
where addition is considered modulo $4n$ with $-2n, \ldots, 2n-1$ as the
representatives of the different equivalence classes. Substituting \eqref{SF}
into \eqref{SFhat} and simplifying gives
$$
\hat{S}_F(k_1,k_2)
= \int_0^{2\pi} \hat{F}_\phi(k_1) \hat{F}_\phi(k_2)
\hat{F}_\phi(-k_1-k_2) d\phi. 
$$
This integral over $\phi$ can be
replaced by a summation over the rotations at the Nyquist rate so that the expression becomes:
\begin{equation} \label{SFE}
\hat{S}_F(k_1,k_2) 
= \sum_{j=0}^{6N-1}
\hat{F}_{\phi_j}(k_1) \hat{F}_{\phi_j}(k_2)
\hat{F}_{\phi_j}(-k_1-k_2),
\end{equation}
where $\phi_j := 2\pi j/(6 N)$. 
 If $N := \max \{ \nu : \lambda_{\nu,1} \le \lambda\}$, then the
products $F_\phi(x) F_\phi(x + x_1) F_\phi(x +
x_2)$ that appear in~\eqref{SFE} are band-limited by $ N$
with respect to $\phi$.
Also, note that the summand on the right hand side of
\eqref{SFE} is the DFT of the third autocorrelation of a
function and is called the bispectrum \cite{bendory2017bispectrum,sadler1992shift,tukey1953spectral}. We encountered the one-dimensional
analogue of the bispectrum at end of the proof to Proposition~\ref{autos}.


\subsection{Vector notation} \label{vecnote}
Recall the enumeration $\mathcal{V}$ from \S \ref{bandlimit}, which consists of 
$(\nu_1,q_1),\ldots,(\nu_d,q_d)$ such that 
$$
\hat{F}_\phi(k) = \sum_{j=1}^d \alpha_{\nu_j,q_j} \hat{\Psi}_{\nu_j,q_j}(k)
e^{i \nu_j \phi},
$$
for $k \in \mathcal{X}$. For a fixed angle $\phi$ and $k \in \mathcal{X}$, we
define the vector $u(\phi,k) \in \mathbb{R}^d$ by
$$
u_j(\phi,k) = \hat{\Psi}_{\nu_j,q_j}(k) e^{i \nu_j \phi}.
$$
Thus, each vector $v \in \mathbb{R}^d$ defines the DFT of
a band-limited function  by
$$
\hat{F}_{v,\phi}(k) = \sum_{j=1}^n v_j u_j(\phi,k) = v^\top u(\phi,k),
$$
where $v^\top$ is the transpose of $v$.  The following lemma is immediate from
\eqref{SFE} and the product rule.

\begin{lemma} \label{lemcomp}
We have
$$
\hat{S}_{F_v}({k}_1,{k}_2) = \\ \sum_{j=0}^{6N-1} 
v^\top u(\phi_j,k_1) 
v^\top u(\phi_j,k_2)
v^\top u( \phi_j, -{k}_1-{k}_2) 
$$
where $\phi_j := 2\pi j/(6 N)$. Moreover, the $d$-dimensional gradient $\nabla_v
\hat{S}_{F_v}$ satisfies
\begin{multline*}
\nabla_v \hat{S}_{F_v}(k_1,k_2) =  \sum_{j=0}^{6N-1} \Big( v^\top
u(\phi_j,k_1)  v^\top u(\phi_j,k_2) u(\phi_j,-k_1-k_2) \\
+ v^\top u(\phi_j,k_1) v^\top u(\phi_j,-k_1-k_2) u(\phi_j,k_2) \\
+ v^\top u(\phi_j,k_2) v^\top u(\phi_j,-k_1-k_2) u(\phi_j,k_1) \Big).
\end{multline*}
\end{lemma}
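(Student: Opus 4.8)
The plan is to substitute the vector parametrization of \S\ref{vecnote} into the bispectrum identity \eqref{SFE} and then differentiate termwise. First I would note that for every $v \in \mathbb{R}^d$ the function $F_v$ is band-limited of precisely the form \eqref{model}, with expansion coefficients $\alpha_{\nu_j,q_j} = v_j$, so \eqref{SFE} applies to $F_v$ without change and yields
$$
\hat{S}_{F_v}(k_1,k_2) = \sum_{j=0}^{6N-1} \hat{F}_{v,\phi_j}(k_1)\,\hat{F}_{v,\phi_j}(k_2)\,\hat{F}_{v,\phi_j}(-k_1-k_2).
$$
By the definition $\hat{F}_{v,\phi}(k) = v^\top u(\phi,k)$ from \S\ref{vecnote}, replacing each of the three factors gives the first displayed formula of the lemma.

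For the gradient, I would observe that each summand is a product $g_1(v)\,g_2(v)\,g_3(v)$ of three scalar linear functionals, where $g_\ell(v) = v^\top u(\phi_j,\kappa_\ell)$ with $\kappa_1 = k_1$, $\kappa_2 = k_2$, $\kappa_3 = -k_1 - k_2$. Since $\nabla_v(v^\top w) = w$ for any constant vector $w$, the product rule gives
$$
\nabla_v\big(g_1 g_2 g_3\big) = g_2 g_3\, u(\phi_j,\kappa_1) + g_1 g_3\, u(\phi_j,\kappa_2) + g_1 g_2\, u(\phi_j,\kappa_3),
$$
and summing over $j$ using linearity of the gradient produces exactly the claimed expression for $\nabla_v \hat{S}_{F_v}(k_1,k_2)$.

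There is essentially no obstacle here: as the lemma statement indicates, the result is immediate from \eqref{SFE} and the product rule. The only point worth a sentence is that the Nyquist-rate collapse of the $\phi$-integral to a finite sum in \eqref{SFE} remains legitimate for $F_v$; this holds because $F_v$ is band-limited by the same $N$ independently of $v$, so the integrand is a trigonometric polynomial in $\phi$ of degree at most $3N$, which the $6N$ equispaced nodes integrate exactly. Beyond that, everything reduces to differentiating a homogeneous cubic polynomial in the entries of $v$.
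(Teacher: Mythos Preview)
Your proposal is correct and follows exactly the approach the paper itself indicates: the paper offers no separate proof, stating only that the lemma ``is immediate from \eqref{SFE} and the product rule,'' which is precisely what you carry out. Your added remark justifying the Nyquist collapse for $F_v$ is a welcome clarification but not something the paper spells out.
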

In the next section, we describe how to estimate $S_F$ from a measurement
$M$ and form an optimization problem to recover the target image $F$
from the measurement~$M$.

\subsection{Computational complexity}
For some intuition regarding the computational complexity of computing
$\hat{S}_{F_v}$ and $\nabla_v S_{F_v}$, recall that $\mathcal{X} =
\{-2n,\ldots,2n-1\}^2$ so that, crudely, the image $F$ has
$\sim n^2$ pixels. In the following, we make the assumption that
the number of eigenfunctions used to expand the function $F$ should not exceed
the number of pixels in the image. Notationally, $|\mathcal{V}| = \mathcal{O}(n^2)$. 

\begin{proposition}  
We can compute $\hat{S}_{F_v}(k_1,k_2)$ and $\nabla_v \hat{S}_{F_v}(k_1,k_2)$
for all~$(k_1,k_2) \in \mathcal{X}^2$ in~$\mathcal{O}(n^5)$ operations.
\end{proposition}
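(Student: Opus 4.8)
The plan is to reduce both quantities to one shared table of precomputed scalars and then bound the cost of assembling the formulas of Lemma~\ref{lemcomp} over all pairs. I will use the counts $|\mathcal{X}| = \mathcal{O}(n^2)$, hence $|\mathcal{X}|^2 = \mathcal{O}(n^4)$, together with $d = |\mathcal{V}| = \mathcal{O}(n^2)$ and $N = \mathcal{O}(n)$; the last holds because each angular frequency $\nu$ carries $\mathcal{O}(n)$ radial modes, so $d = \mathcal{O}(n^2)$ forces $N = \mathcal{O}(n)$, and the rotation grid $\{\phi_j\}_{j=0}^{6N-1}$ therefore has $\mathcal{O}(n)$ points. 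First I would precompute the scalars $w_j(k) := v^\top u(\phi_j,k)$ for every $j$ and every $k \in \mathcal{X}$. Using the steerable factorization $u_\ell(\phi_j,k) = \hat{\Psi}_{\nu_\ell,q_\ell}(k)\,e^{i\nu_\ell\phi_j}$, I group the sum over the index $\ell$ by angular frequency, forming $c_\nu(k) := \sum_{\ell:\,\nu_\ell=\nu} v_\ell\,\hat{\Psi}_{\nu,q_\ell}(k)$ for all $\nu$ and $k$ in $\mathcal{O}(|\mathcal{X}|\,d)=\mathcal{O}(n^4)$ operations, and then obtaining $w_j(k)=\sum_\nu c_\nu(k)\,e^{i\nu\phi_j}$ for all $j,k$ as a length-$\mathcal{O}(N)$ transform in $j$, again $\mathcal{O}(n^4)$. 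This table is shared by both halves of the proposition.

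For $\hat{S}_{F_v}$, Lemma~\ref{lemcomp} gives $\hat{S}_{F_v}(k_1,k_2)=\sum_{j=0}^{6N-1} w_j(k_1)\,w_j(k_2)\,w_j(-k_1-k_2)$, where $-k_1-k_2$ is reduced modulo $4n$ to lie in $\mathcal{X}$. Every factor is already tabulated, so each pair costs $\mathcal{O}(N)=\mathcal{O}(n)$, and summing over the $|\mathcal{X}|^2=\mathcal{O}(n^4)$ pairs gives $\mathcal{O}(n^5)$, which dominates the precomputation.

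For the gradient I would exploit that the three summands in the gradient formula of Lemma~\ref{lemcomp} reuse exactly the weights $w_j(\cdot)$. Substituting $v^\top u(\phi_j,\cdot)=w_j(\cdot)$ and reading off the $\ell$-th component, the radial index $q_\ell$ enters only through the fixed factors $\hat{\Psi}_{\nu_\ell,q_\ell}(\cdot)$, while all $v$- and pair-dependent data collapse into
\[
W_\nu(a,b):=\sum_{j=0}^{6N-1} w_j(a)\,w_j(b)\,e^{i\nu\phi_j},
\]
indexed by an angular frequency $\nu$ and a pair $(a,b)\in\mathcal{X}^2$, via
\[
[\nabla_v \hat{S}_{F_v}(k_1,k_2)]_\ell = \hat{\Psi}_{\nu_\ell,q_\ell}(-k_1-k_2)\,W_{\nu_\ell}(k_1,k_2) + \hat{\Psi}_{\nu_\ell,q_\ell}(k_2)\,W_{\nu_\ell}(k_1,-k_1-k_2) + \hat{\Psi}_{\nu_\ell,q_\ell}(k_1)\,W_{\nu_\ell}(k_2,-k_1-k_2).
\]
Thus the gradient at every pair is determined by the table $\{W_\nu(a,b)\}$ together with the fixed transform $\hat{\Psi}$. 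Because $\{e^{i\nu\phi_j}\}_j$ is orthogonal on the grid, the sum over $j$ reduces to a short convolution in the frequency index, $W_\nu(a,b)=6N\sum_\mu c_\mu(a)\,c_{-\nu-\mu}(b)$; tabulating it over all $\mathcal{O}(N)$ frequencies and all $\mathcal{O}(n^4)$ pairs, with fast transforms used for the length-$\mathcal{O}(N)$ convolutions, keeps the assembly within the $\mathcal{O}(n^5)$ budget of $\hat{S}_{F_v}$.

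The main obstacle is precisely the gradient: its value at a single pair lives in $\mathbb{R}^d$ with $d=\mathcal{O}(n^2)$, so writing out the full Jacobian entry-by-entry over all $\mathcal{O}(n^4)$ pairs would require far more than $\mathcal{O}(n^5)$ work. The crux is therefore to never expand into the radial index, but to carry the gradient in the compact steerable form $\{W_\nu(a,b)\}$, from which any component is recovered by the fixed linear read-out displayed above. The one point that needs care is verifying that the three-term formula factors exactly through the $W_\nu$ as claimed, and that building this table via the convolution identity and fast transforms does not exceed the cost of forming $\hat{S}_{F_v}$ itself.
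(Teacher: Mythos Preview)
Your approach is correct in spirit and shares the paper's core idea: precompute the scalars $w_j(k)=v^\top u(\phi_j,k)$, assemble $\hat S_{F_v}$ at $\mathcal O(n)$ per pair over $\mathcal O(n^4)$ pairs, and---crucially---never expand the gradient componentwise, since the raw output size is $\mathcal O(n^6)$. Where you diverge is in the compact representation you choose. The paper keeps the gradient in the rotation-index basis: each $\nabla_v\hat S_{F_v}(k_1,k_2)$ is a linear combination, with $\mathcal O(N)=\mathcal O(n)$ terms, of vectors drawn from the $\mathcal O(n^3)$-element dictionary $\{u(\phi_j,k)\}_{j,k}$; the coefficients are simply products $w_j(k_1)w_j(k_2)$ etc., each computable in $\mathcal O(1)$ from the table, so the full coefficient array costs exactly $\mathcal O(n^5)$, and summing $\mathcal O(n^3)$ length-$\mathcal O(n^2)$ vectors is again $\mathcal O(n^5)$. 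You instead apply a further transform in the $j$-index to produce the steerable table $W_\nu(a,b)$, which buys you an explicit $\mathcal O(1)$ per-component read-out through $\hat\Psi$. That extra structure is nice, but it is not needed for the bound and, as you describe it, it costs $\mathcal O(n^5\log n)$ rather than $\mathcal O(n^5)$: you have $\mathcal O(n^4)$ pairs and for each a length-$\mathcal O(N)$ FFT (or convolution), so the log factor is unavoidable in that organization. To land exactly on $\mathcal O(n^5)$, stop one step earlier and store the coefficients in the $j$-basis, as the paper does. Incidentally, your precomputation of $w_j(k)$ via the grouping $c_\nu(k)$ is sharper than the paper's direct inner products ($\mathcal O(n^4)$ versus $\mathcal O(n^5)$), but both fit the budget.
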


\begin{proof} First, we compute $v^\top u(\phi_j,k)$ for $j = 0,\ldots,6N-1$ and
$k \in \mathcal{X}$.  Each inner product resolves to $\mathcal{O}(n^2)$
operations for $\mathcal{O}(N n^2)$ total evaluations. 
Eigenvalue asymptotics show that $N$ is of the order of $\sqrt{|\mathcal{V}|} =
\mathcal{O}(n)$ so this computation involves $\mathcal{O}(n^5)$ operations.

After this pre-computation, it is straightforward to calculate
$\hat{S}_{F_v}(k_1,k_2)$ for all $(k_1,k_2) \in \mathcal{X}^2$ in
$\mathcal{O}(n^5)$ operations.  For $\nabla_v \hat{S}_{F_v}(k_1,k_2)$, the
key observation is that the gradient is a linear combination of
$\mathcal{O}(n^3)$ vectors of length $\mathcal{O}(n^2)$. First, we compute the
coefficients in $\mathcal{O}(n^5)$ operations and then sum the vectors in
$\mathcal{O}(n^5)$ operations.

\end{proof}

\section{ Algorithms and Numerical Results}
\subsection{Optimization problem}
We now delineate an optimization problem for the estimation of $F$ from $S_F$. 
Recall that each vector $v \in \mathbb{R}^d$ defines the Fourier transform of a
band-limited function on the disc by 
$$
\hat{F}_{v,\phi}(k) =  v^\top u(\phi,k),
$$
as in \S \ref{vecnote}. We define the least squares cost function $g : \mathbb{R}^d \rightarrow
\mathbb{R}$ by
$$
g(v) = \frac{1}{2} \sum_{(k_1,k_2) \in \mathcal{X}^2} \left(
\hat{S}_{F_v}(k_1,k_2) - \hat{S}_F(k_1,k_2) \right)^2.
$$
Using the chain rule, the gradient of $\nabla g$ is
$$
\nabla g (v)
= \\ \sum_{(k_1,k_2) \in \mathcal{X}^2}  \left( \hat{S}_{F_v}(k_1,k_2) -
\hat{S}_F(k_1,k_2) \right) \nabla_v \hat{S}_{F_v}(k_1,k_2) ,
$$
where $\hat{S}_{F_v}$ and $\nabla_v \hat{S}_{F_v}$ can be computed via the
formulas in Lemma \ref{lemcomp}.


\subsection{Recovery from invariant features}
Given a cost function and a gradient, there are a variety of optimization
methods that can be used. For simplicity, we use the Broyden-Fletcher-Goldfarb-Shanno (BFGS)
algorithm, which is a popular gradient-based optimization method. 

First, we consider the problem of recovering
$F$ from $S_F$ in the absence of noise. We generate a band-limited image $F$ by
projecting a $65 \times 65$ image of a tiger onto the span of the first $600$
Dirichlet Laplacian eigenfunctions as in Figure \ref{fig1}.
\begin{figure}[ht!]
\centering
\includegraphics[width=.5\textwidth]{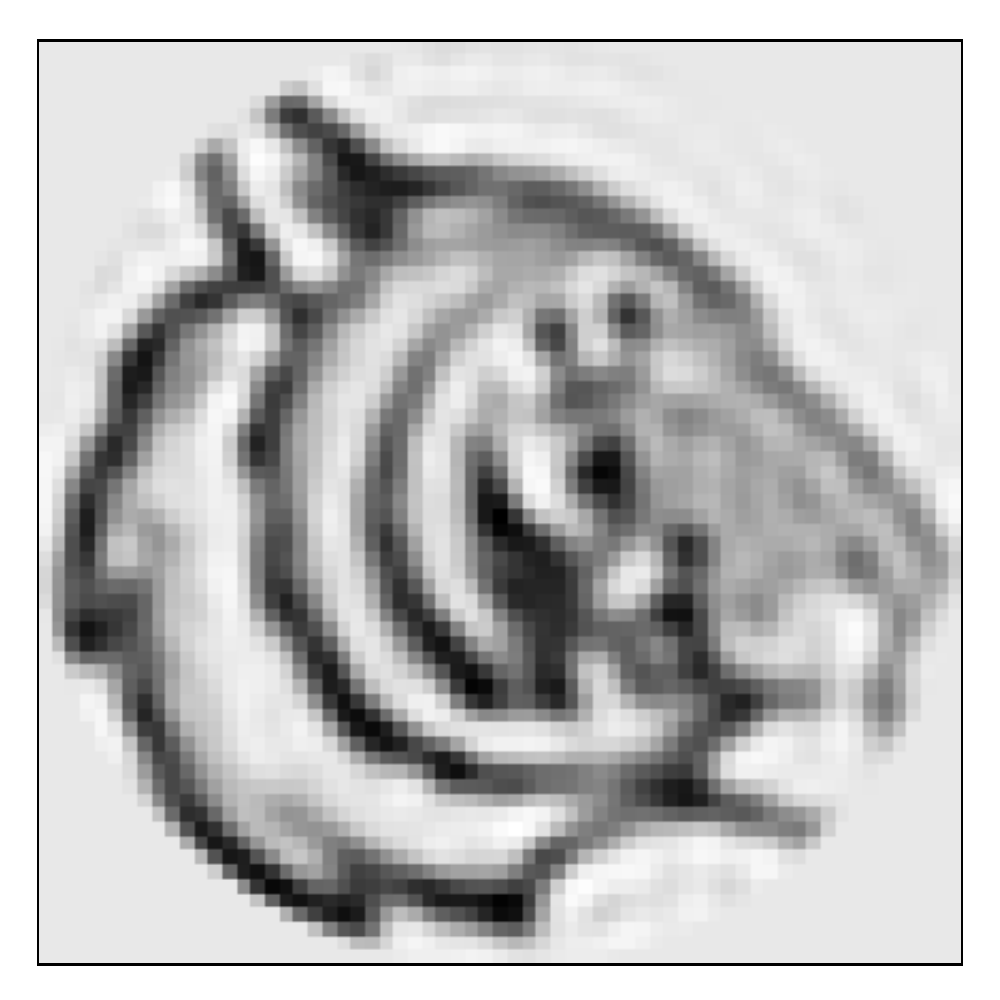} 
\vspace{-1ex}
\caption{\label{fig1} The projection of a $65 \times 65$ image of a tiger onto
the span of the first $600$ Dirichlet Laplacian eigenfunctions on a disc.}
\end{figure}
Using the BFGS optimization algorithm, the image in Figure \ref{fig1} can be recovered with 
reconstruction error $\text{error}_\text{recon} = 5\times 10^{-12}$. 
This optimization takes $6.5\times10^4$ seconds parallelized over 100 CPUs in total. 

{
\begin{remark}[Computational limitations of implementation]
We use 600 eigenfunctions for this example due to computational limitations. The
method was implemented as a CPU code, but is highly amiable to parallelization.
Implementing the method described in this paper to support the use of GPUs would
greatly increase the number of images and eigenfunctions that could be
considered; however, since our goal is to present a proof-of-concept of the
method, we choose not to pursue this optimization of the code for our numerical
results; however, creating a GPU version of the method is an interesting
potential extension of this work.
\end{remark}
}

\subsection{Using symmetry to average noise} \label{binstrat}
Recall that $S_F(x_1,x_2)$ is a discrete version of $s_f: \mathbb{R}^2 \times
\mathbb{R}^2 \rightarrow \mathbb{R}$ by
$$
s_f(x_1,x_2) := \frac{1}{2\pi} \int_0^{2 \pi} \int_{\mathbb{R}^2}
f_\phi(x) f_\phi(x+x_1) f_\phi(x+x_2) dx d\phi,
$$
which only depends on the three parameters: the magnitudes $|{x}_1|$, $|{x}_2|$
and the angle $\theta({x}_1,{x}_2)$ between $x_1$ and $x_2$.  Moreover, the
Fourier transform $\hat{s}_f$ of $s_f$ will have these same symmetries. So, it
follows that $\hat{S}_F$, which is a discrete version of $\hat{s}_f$, will also
approximately exhibit these symmetries. 

However, since $S_F$ is sampled on a grid, the symmetry will not be exact.
In order to still take advantage of the expected symmetry when $S_F(x_1,x_2)$ is estimated from a noisy measurement $M$, we introduce a ``binning''
function. Let $b :\mathcal{X} \times \mathcal{X} \rightarrow \mathbb{Z}^3$ be defined by
$$
b({k}_1,{k}_2) = \left( \left\lfloor b_1|{k}_1|
\right\rfloor, \left\lfloor b_1|{k}_2| \right\rfloor
,\left\lfloor b_2 \theta({k}_1,{k}_2) \right\rfloor \right),
$$
for fixed parameters $b_1,b_2 \in \mathbb{R}$ and $\mathcal{T} \subset
\mathbb{Z}^3$ be the range of $b$. 
The corresponding cost function $g_b : \mathbb{R}^d \rightarrow
\mathbb{R}$ is then
$$
g_b(v) = \frac{1}{2} \sum_{T \in \mathcal{T}} \left( \sum_{(k_1,k_2) \in
I_T} \left( \hat{S}_{F_v}(k_1,k_2) -
\hat{S}_F(k_1,k_2) \right) \right)^2,
$$
such that
$$
\nabla g_b (v)
= \sum_{T \in \mathcal{T}}  \left( 
\sum_{(k_1,k_2) \in  I_T}
\hat{S}_{F_v}(k_1,k_2) -
\hat{S}_F(k_1,k_2) \right) \cdot 
\sum_{(k_1,k_2) \in  I_T}
\nabla
\hat{S}_{F_v}(k_1,k_2) ,
$$
where $I_T =\{(k_1,k_2) \in \mathcal{X}^2 : b(k_1,k_2) = T\}$ is the
pre-image of $T$ under $b$. This is the same as the cost function $g(v)$ above
except that the elements are now summed within the same symmetric bin $I_T$. For
the numerical results reconstructing $F$ from a noisy measure, we report errors
with binning.

{
\begin{remark}[Estimating noise level] In the following numerical example, we assume that the noise level is known.
However, for applications it would be necessary to estimate the noise level.
In standard cryo-EM experiments, estimating the statistics of the noise is part of the standard computational pipeline~\cite{bendory2019single}.
 If
the noise dominates the signal---which is the regime of interest of this paper---then it may be easy to get a good initial guess
of the noise level; see for example~\cite{eldar2020klt}.
 Afterwards, the noise could be estimated iteratively, or the
algorithm could be run at various noise levels.

\end{remark}
}

{
\begin{remark}[Uniform in-plane rotation]
 In cryo-EM, the in-plane rotations are uniformly distributed since the specific rotation of the micrograph is arbitrarily chosen by the practitioner and there is no
 physics reason for bio-molecules to prefer a given orientation.
In contrast, the viewing directions distribution is typically non-uniform due to several physic reasons, such as specimen adherence to the air-water interface~\cite{baldwin2020non}.
 
%
\end{remark}
}

\subsection{Recovery from noisy micrographs}
Returning to the original problem presented in \S \ref{intro}, we
recall the problem of recovering a band-limited image $f$ from a measurement $M$ 
as the size $m$ of the measurement tends to infinity. To approximate the behavior practically, for 
both computational purposes and reflecting the applications to cryo-EM, we
fix $m = 1000$ and call a $1000 \times 1000$
measurement $M$ a micrograph. { The numerical experiments follow
the model described in this paper:  we define each $1000
\times 1000$ micrograph by \eqref{Meq}. We compute the third order
autocorrelation $A_M$ as defined by \S \ref{2dautodef}; we assume that
$\gamma$ and $\sigma$ are known for simplicity. }

We report numerical results in terms of the
number of independent micrographs that are used. In particular, we report both
the relative error in the invariant $S_F$ and the recovery of the image $F$. The
error for $S_F$ is summed over the bins while the relative error in recovering
$F$ is calculated after running BFGS optimization using the cost and gradient
described in \S \ref{binstrat}. In numerical experiments, we take a target image
with $n = 17$ and assume that the image is band-limited in the first $100$
Dirichlet Laplacian eigenfunctions. The relative errors from these experiments
are displayed in Figure~\ref{results}.

\begin{figure}[ht!] 
\centering
\includegraphics[width=0.6\textwidth]{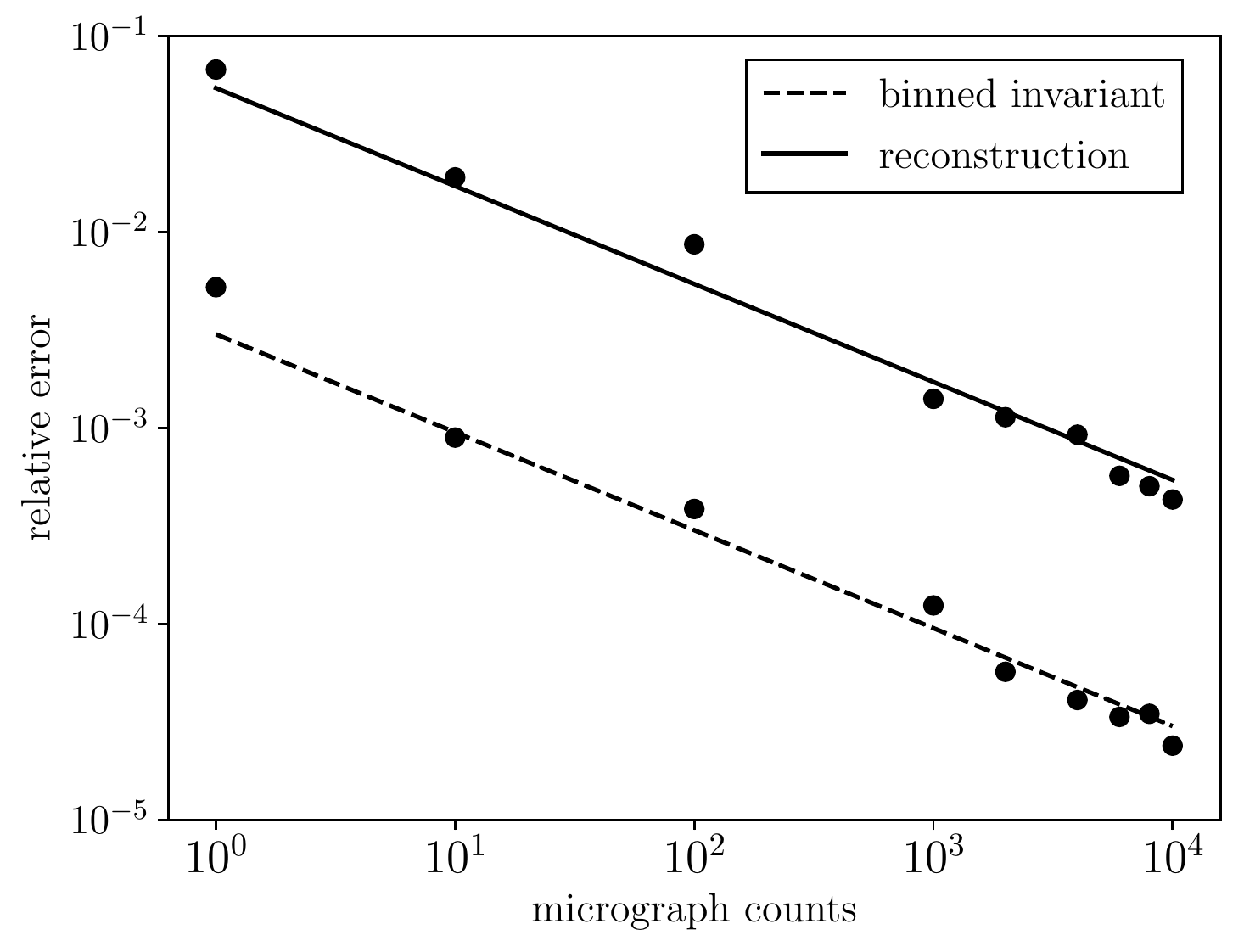}  \caption{\label{results} Relative error in the binned invariant $S_F$ summed over the bins 
and the reconstruction of $F$ as a function of the number of micrographs for 
an image with $n = 17$ with SNR $= 10^2$.}
\end{figure}

Note that the relative error in the binned invariant $S_F$ and reconstruction
both decrease at a consistent rate of one over the square root of the number of
micrographs (the expected estimation rate if the locations and rotations of the
images were known). Thus, with enough micrographs, these results indicate that
the recovery is possible regardless of the level of noise. Moreover, this
conclusion resolves the initial question of this paper: we have presented an
algorithm for the recovery of an image $f$ from a measurement of the form
\eqref{Meq} that gives predictable results in terms of the error in the
invariant.

\section{Discussion}
This paper contributes to a series of works whose goal is to understand the limits
of image recovery using an invariant-based approach to solve multi-target
detection problems. 
After the challenges of particle picking were identified and a detection limit was proven,
a promising direction recalled autocorrelation analysis for image recovery that did not
rely on first identifying signal location within a measurement. This paper shows that the
estimation of the target image is possible theoretically in one-dimension and builds on this
intuition to empirically demonstrate that direct recovery is possible regardless of
noise level in two-dimensional settings with in-plane translations and rotations of the
target signal.

Future work includes extending Theorem~\ref{autos} to the two-dimensional
case, studying the high-dimensional regime where the size of the target image is
large (see for example~\cite{romanov2020multi,dou2022rates}), and exploring super-resolution
limits  in the MTD model~\cite{bendory2020super}.  Our ultimate goal is to
complete the program outlined in~\cite{bendory2018toward} and devise a
computational framework to recover a three-dimensional molecular structure
directly from micrographs.

\subsection*{Acknowledgments} 
NFM was supported in part by NSF DMS-1903015. TYL and AS 
were supported in part by Award Number FA9550-20-1-0266
from AFOSR, Simons Foundation Math+X Investigator Award, the Moore Foundation
Data-Driven Discovery Investigator Award, NSF BIGDATA Award IIS-1837992, NSF
DMS-2009753 and NIH/NIGMS 1R01GM136780-01 .  TB  was supported in part by
NSF-BSF grant no. 2019752,  ISF grant no. 1924/21,   BSF grant no. 2020159,
 and the Zimin Institute for Engineering Solutions
Advancing Better Lives. 

\bibliographystyle{plain}
\bibliography{ref}

\appendix
\section{Technical lemmas} 
\label{sec:technical_lemmas}
\subsection*{Proof of Lemma \ref{lem1} (Expectation)}
By the definition of $A_M$ and linearity of expectation, we have
$$
\mathbb{E} \left( A_M(y_1,y_2) \right) = \frac{1}{m} \sum_{x=1}^m \mathbb{E}
\left( M(x) M(x+y_1)
M(x+y_2) \right).
$$
Let $y_0 = 0$ for notational purposes.  By the definition of $M$, see
\eqref{M1d}, we have
$$
\mathbb{E} \left( A_M(y_1,y_2) \right) = \\ \frac{1}{m} \sum_{x=1}^m \mathbb{E}
\left( \prod_{k=0}^2 \left(\sum_{j=1}^p F_{\tau_j}(x - x_j +y_k ) +
\varepsilon(x + y_k) \right) \right).
$$
If the product in this expression is expanded, any terms with odd powers
of the noise term $\varepsilon(x)$ will have expectation zero. So, we only need to
consider terms where $\varepsilon$ does not appear, denoted $T_0(y_1,y_2)$, and
terms where $\varepsilon$ appears twice, denoted $T_2(y_1,y_2)$.  We have
\begin{multline*}
T_0(y_1,y_2) = \frac{1}{m} \sum_{x=1}^m  \sum_{j_1,j_2,j_3=1}^p \Big( \\
\mathbb{E} \left( F_{\tau_{j_1}}(x+x_{j_1}) F_{\tau_{j_2}}(x+y_1+x_{j_2})
F_{\tau_{j_3}}(x+y_2+x_{j_3}) \right) \Big).
\end{multline*}
By the separation condition, only terms where $j_1=j_2 = j_3$ are nonzero, so 
$$
T_0(y_1,y_2) = \frac{1}{m}  \sum_{j = 1}^p \Big( \\ \mathbb{E} \left(
\sum_{x=1}^m F_{\tau_{j}}(x+x_{j}) F_{\tau_{j}}(x+y_1+x_{j})
F_{\tau_{j}}(x+y_2+x_{j}) \right) \Big),
$$
with the sum over $x$ moved inside the expectation. Using the fact that
the circular shifts $\tau_j$ are uniformly random and that $F$ is supported on
$\{-n,\ldots,n-1\}$ gives
$$
T_0(y_1,y_2) = \frac{p}{m} \frac{1}{2n} \sum_{\tau=-n}^{n-1}  \sum_{x=-n}^{n-1}
\Big( \\
F_{\tau}(x) F_{\tau}(x+y_1) F_{\tau}(x+y_2) =  \frac{\gamma}{m} V_F(y_1,y_2)
\Big),
$$
where the final equality results from the definition of $V_F$ in \eqref{Teq} and
the definition of the density $\gamma = n p /m$.  It remains to consider the
terms $T_2(y_1,y_2)$ where two of the three noise terms $\varepsilon(x)$,
$\varepsilon(x+y_1)$, and $\varepsilon(x+y_2)$ appear.  The product of two of
these terms only has nonzero expectation if $y_1=0$,
$y_2=0$, or $y_1 =y_2$ so we have
\begin{multline*}
T_2(y_1,y_2) =  \frac{1}{m} \sum_{x=1}^m 
\sum_{j=1}^p
\mathbb{E}  \Big( \delta_0(y_1)
\varepsilon(x) \varepsilon(x+y_1)  F_{\tau_k}(x+x_{j}+y_2) + \\
 \delta_0(y_2)
\varepsilon(x) \varepsilon(x+y_2)  F_{\tau_k}(x+x_{j} + y_1) + \\
 \delta_{y_1}(y_2)
\varepsilon(x+y_1) \varepsilon(x+y_2)  F_{\tau_k}(x+x_{j}) \Big),
\end{multline*}
where $\delta_x(y) = 1$ when $x=y$ and $\delta_x(y) = 0$ otherwise.  By the
independence of the noise and the random cyclic shifts we have
$$
T_2(y_1,y_2) = \frac{p}{m} \sigma^2 2n T_F \left( \delta_0(y_1) + \delta_0(y_2)
+ \delta_{y_1}(y_2) \right),
$$
where $T_F$ denotes the mean of $F$, see \eqref{TF}. Adding $T_0(y_1,y_2)$ and
$T_2(y_1,y_2)$ gives the desired result:
$$
	\mathbb{E}( A_M(x_1,x_2)) = \frac{\gamma}{n} V_F(x_1, x_2) \\+ 2 \gamma T_F \sigma^2 (\delta_0 (x_1 -
	x_2) + \delta_0 (x_1) + \delta_0 (x_2)).
$$

\subsection*{Proof of Lemma \ref{lem1} (Variance)}
Let $\tilde{M}$ be an independent identically distributed copy of $M$. We can express the variance of $A_M(y_1, y_2)$ using $\tilde{M}$ as
$$
\Var \left( A_M(y_1,y_2) \right) = \\ \mathbb{E} \left( A_M(y_1,y_2) \left(
A_M(y_1,y_2) - A_{\tilde{M}}(y_1,y_2)  \right) \right).
$$
Expanding the right hand side gives
\begin{multline*}
\Var \left(A_M(y_1,y_2) \right) =  \frac{1}{m^2} \sum_{x,y=1}^m 
\mathbb{E} \left( M(x) M(x+y_1) M(x+y_2) \vphantom{\tilde{M}} \right.\\ 
 \left.\left(M(y) M(y+y_1) M(y+y_2) - \tilde{M}(y)
\tilde{M}(y+y_1) \tilde{M}(y+y_2) \right) \right).
\end{multline*}
By construction, the expectation of the terms in the sum is zero when $M(x)
M(x+y_1) M(x+y_2)$ and $M(y) M(y+y_1) M(y+y_2)$ are independent, which is the
case if
$$
x - y \not \in \{-n,\ldots,n-1\}.
$$
It follows that
\begin{multline*}
\Var \left( A_M(y_1,y_2) \right) =  
\frac{1}{m^2} \sum_{y=x-n + 1}^{x+n} \sum_{x=1}^m \mathbb{E} \left( M(x) M(x+y_1)
M(x+y_2) \vphantom{\tilde{M}} \right. \\ 
\left. \left(M(y) M(y+y_1) M(y+y_2) - \tilde{M}(y) \tilde{M}(y+y_1)
\tilde{M}(y+y_2) \right) \right).  
\end{multline*}
By Cauchy-Schwarz, it follows that
$$
\Var \left( A_M(y_1,y_2) \right) \le \\ \frac{1}{m^2} \sum_{y=x-n + 1}^{x+n}
\sum_{x=1}^m 2 \mathbb{E} \left(\left( M(x) M(x+y_1) M(x+y_2) \right)^2 \right).
$$
Given $|F| < F_{max}$ everywhere for some constant $F_{max}>0$, we can estimate
$$
\mathbb{E} \left( \left( M(x) M(x+y_1) M(x+y_2) \right)^2 \right) =
\mathcal{O}(\gamma F^6_{max} + \sigma^6)
$$
where $\sigma^2$ is the variance of the Gaussian noise. It follows that
$$
\Var \left( A_M(y_1,y_2) \right) \le \mathcal{O} \left( \frac{n}{m}(\gamma
F_{max}^6 + \sigma^6) \right),
$$
as was to be shown.

\end{document}